\documentclass[11pt,pdftex,letter]{article}
\usepackage[lined,boxed,commentsnumbered]{algorithm2e}

\usepackage{amssymb}
\usepackage{comment}
\usepackage{amsmath}
\usepackage{fullpage}
%\usepackage[pdftex]{graphicx}
%\DeclareGraphicsRule{*}{mps}{*}{<++>}

\ifx\pdftexversion\undefined
\usepackage[dvips]{graphicx}
\else
  \usepackage[pdftex]{graphicx}
  \DeclareGraphicsRule{*}{mps}{*}{}
\fi

\def\lf{\tiny}

\def\nnll{\refstepcounter{linenumber}\lf\thelinenumber}
\newcounter{linenumber}

\usepackage[usenames,dvipsnames]{color}
\usepackage{subfigure}
\usepackage{listings}
\usepackage{url}
\usepackage{wrapfig}
\usepackage{cite}
\usepackage{framed,color}
\definecolor{shadecolor}{rgb}{0.9,0.9,0.9}
\usepackage[boxed]{algorithm2e}
%\usepackage{comment}

%[[PKto change spacing
%\usepackage{titlesec}
%\titlespacing\section{0pt}{7pt}{6pt}
%]]

%\newtheorem{theorem}{Theorem}[section]
%\newtheorem{takeaway}[theorem]{Takeaway}
%\newtheorem{fact}[theorem]{Fact}

%\pdfpagewidth=8.5in
%\pdfpageheight=11in

% url.sty was written by Donald Arseneau. It provides better support for
% handling and breaking URLs. url.sty is already installed on most LaTeX
% systems. The latest version can be obtained at:
% http://www.ctan.org/tex-archive/macros/latex/contrib/misc/
% Read the url.sty source comments for usage information. Basically,
% \url{my_url_here}.

\newcommand{\CPO}{\textsc{FixTag}}
\newcommand{\DPO}{\textsc{ReuseTag}}
\newcommand{\PS}{\textsc{PS}}

%\newtheorem{theorem}{Theorem}[section]
%\newtheorem{lemma}{Lemma}[section]
%\newtheorem{claim}{Claim}[section]
%\newtheorem{scenarios}{Scenarios}[section]
%\newtheorem{observation}{Observation}
%\newtheorem{takeaway}{Takeaway}[section]
%\newtheorem{definition}{Definition}

%
%
% carry over Herald's group cool way of marking changes
\definecolor{heraldBlue}{rgb}{0.0,0.0,0.8}
\definecolor{heraldRed}{rgb}{0.8,0.0,0.0}
\definecolor{heraldGray}{rgb}{0.4,0.4,0.4}
\definecolor{heraldBlack}{rgb}{0.0,0.0,0.0} %removes comment color
\definecolor{heraldGreen}{rgb}{0.0,0.4,0.0} %removes comment color

\newcommand{\dom}{\textit{dom}}
\newcommand{\pr}{\textit{pr}}

\newcommand{\seq}{\textit{seq}}
\newcommand{\cur}{\textit{cur}}
\newcommand{\Tag}{\textit{tag}}

\newcommand{\smartparagraph}[1]{\noindent{\bf #1}\ }
\newcommand{\eg}{{\it e.g.}}
\newcommand{\ie}{{\it i.e.}}

\newcommand{\etal}{{\it et al.}\xspace}

\def\NOTES{1}
\def\SAVESPACE{1}

% variables may contain the above definitions to control build options at compile-time
%\input{variables}

\if \SAVESPACE 1
%\usepackage{setspace}
%\usepackage{titlesec}
%\titlespacing\section{0pt}{7pt}{6pt}
\usepackage{titlesec}
\titlespacing\section{0pt}{7pt}{6pt}

\fi

\if \NOTES 1
\newcommand{\mcnote}[1]{\textcolor{heraldBlue}{\small \bf [MC: #1]}}
\newcommand{\dlnote}[1]{\textcolor{heraldBlue}{\small \bf [DL: #1]}}
\newcommand{\ssnote}[1]{\textcolor{heraldBlue}{\small \bf [SS: #1]}}
\newcommand{\pknote}[1]{\textcolor{heraldBlue}{\small \bf [PK: #1]}}
\else
\newcommand{\mcnote}[1]{}
\newcommand{\dlnote}[1]{}
\newcommand{\ssnote}[1]{}
\newcommand{\pknote}[1]{}
\newcommand{\problem}[1]{}
\fi

%Petr's definitions
\newcommand{\ack}{\textit{ack}}
\newcommand{\nack}{\textit{nack}}

\newcommand{\ignore}[1]{}

%

%[[PK environments for article style
\newtheorem{theorem}{Theorem}

\newtheorem{definition}[theorem]{Definition}

\newenvironment{proof}[1][Proof]{\noindent\textbf{#1.} }{\hfill $\Box$\\[2mm]}

%]]

\begin{document}
\sloppy

%\title{Distributed Network Programming}
%\title{The Distributed SDN Update Problem:\\Towards Software Transactional Networks}

%\title{On Consistent Updates in Software Defined Networks under Unreliable Control}
%\title{The Case for Reliable Software Transactional Networking}

%\title{A Distributed SDN Control Plane for Concurrent Policy Updates}

\title{A Distributed SDN Control Plane for Consistent Policy Updates}

\author{
	Marco Canini$^{1}$ \quad Petr Kuznetsov$^{2}$ \quad Dan
        Levin$^{3}$ \quad Stefan Schmid$^{4}$\\
%\thanks{Contact author:
 %         stefan@net.t-labs.tu-berlin.de, FG INET, TEL 16,
  %        Ernst-Reuter Platz 7, 10587 Berlin, Germany}
\\
        $^{1}$ Universit\'{e} catholique de Louvain, Place Sainte Barbe 2, 1348 Louvain-la-Neuve, Belgium\\
        marco.canini@uclouvain.be\\
\\
        $^{2}$ T\'el\'ecom ParisTech, 46 Rue Barrault, 75013 Paris, France\\
        petr.kuznetsov@telecom-paristech.fr\\
\\
        $^{3}$ TU Berlin, Marchstr. 23, 10587 Berlin, Germany\\
        dlevin@inet.tu-berlin.de\\
\\
        $^{4}$ TU Berlin \& T-Labs, Ernst-Reuter Platz 7, 10587 Berlin, Germany\\
	    stefan.schmid@tu-berlin.de}
%	}

%\institute{}

\date{}

\maketitle

\thispagestyle{empty}

%\if \SAVESPACE 1
%\setlength{\floatsep}{3pt}
%\setlength{\textfloatsep}{3pt}
%\setlength{\dbltextfloatsep}{3pt}
%\setlength{\intextsep}{3pt}
%\setlength{\abovecaptionskip}{3pt}
%\fi

% A category with the (minimum) three required fields
%\category{C.2.1}{Network Architecture and Design}{Centralized Networks}
%\category{C.2.4}{Distributed Systems}{Network Operating Systems}
%\terms{Measurement, Performance}
%\keywords{}

\begin{abstract}
Software-defined networking (SDN) is a novel paradigm that out-sources the
control of packet-forwarding switches to a set of software controllers.
The most fundamental task of these controllers is the correct implementation of
the \emph{network policy}, \ie, the intended network behavior. In essence,
such a policy specifies the rules by which packets must be forwarded across the
network.

This paper
%initiates the theoretical study of
studies
a distributed SDN control plane that enables \emph{concurrent} and \emph{robust} policy implementation.
We introduce a formal model describing the
interaction between the data plane and a distributed control plane
(consisting of a collection of fault-prone controllers).
Then we formulate the problem of \emph{consistent} composition of
concurrent network policy updates (short: the \emph{CPC Problem}).
To anticipate scenarios in  which some conflicting policy updates must
be rejected, we enable the composition via a natural \emph{transactional} interface with all-or-nothing semantics.

We show that the ability of an $f$-resilient distributed control
plane to process concurrent policy updates depends on the tag complexity,
\ie, the number of policy labels (a.k.a.~\emph{tags}) available to the controllers, and
describe a CPC protocol with optimal tag
complexity $f+2$.
\end{abstract}

%\vspace{1cm}

%\begin{center}
%{\bf [Regular paper only]}
%\end{center}
%[[PK I guess not anymore?]]

%\vspace{1cm}

%\begin{center}
%{\emph{Contact Address:}\\Marco Canini, Place Sainte Barbe~2, 1348 Louvain-la-Neuve, Belgium\\Tel: $+$32 10 47 48 32,
%marco.canini@uclouvain.be}
% {\emph{Contact Address:}\\Stefan Schmid, MAR 4-4, Marchstr.~23, 10587 Berlin, Germany\\Tel: $+$49 175 930 98 75,
% stefan.schmid@tu-berlin.de}
%\end{center}

%\newpage

%\begin{center}
%{\bf Regular and student paper: Dan Levin is a full-time student.}
%\end{center}

%\newpage
%\pagenumbering{arabic}\setcounter{page}{1}

%\section*{todo before submission}

\begin{comment}

\section*{to dos}

\ssnote{big open questions: tight bound, generalize algos for non-commutative, generalize algos for roger model}

\ssnote{
clarify that a policy may affect multiple ingress ports, and all-or-nothing semantics depends on that; enumerate tags
starting with 1 (\ie, 1,2,3,...).}

\pknote{introduction sounds too forwarding-oriented, shouldn't we
  rather talk general in the beginning?}

\end{comment}

\section{Introduction}

The emerging paradigm of Software-Defined Networking (SDN) promises to
simplify network management and enable building networks
that meet specific, end-to-end requirements.
%SDN centralizes on
In SDN, the \emph{control plane}  (a collection of network-attached servers)
maintains control over the so-called \emph{data plane} (the
packet-forwarding functionality implemented on switching
hardware). Control applications operate on a global, logically-centralized network view,
which introduces opportunities for network-wide management and optimization.
This view enables simplified programming models to define a high-level
network policy, \ie, the intended operational behavior of the network
encoded as a collection of \emph{forwarding rules} that the data plane must respect.

While the notion of centralized control lies at the heart of SDN,
%[[PK so wordy...
%a centralized controller
%constitutes one of the main challenges of the paradigm,
%as the controller constitutes a single-point of failure. Indeed a fully
%centralized system may not adequately or cost-effectively provide the
%required
implementing it on a centralized controller does not provide the required
%]]
levels of availability, responsiveness and scalability.
How to realize a robust, distributed control plane is one of the main open problems in
SDN
%[[PK which hard challenges?
%and is faced with the hard challenges and
and to solve it we must deal with
%]]
fundamental trade-offs between
different consistency models, system availability and performance.
%[[PK phrasing...
%
%The question of how to design a resilient control plane is essentially a
%distributed-computing one that requires reasoning about the behavior of a
%two-tier, distribute systems where control and data nodes have different
%processing capabilities. Indeed,
Implementing a resilient control plane becomes therefore a
distributed-computing problem that requires reasoning about
interactions and concurrency between the controllers while preserving
correct operation of the data plane.
%]]

In this paper, as a case study, we consider the problem of consistent
installation of network-policy \emph{updates} (\ie, collections of
state modifications spanning one or more switches),
one of the main tasks any network control
plane must support.
%[[PK: said below
%The policy updates must be installed on switching hardware respecting a property known
%as \emph{per-packet consistency}~\cite{network-update}.
%[[PK why in particular?
%In particular,
%]]
We consider a multi-authorship setting~\cite{Ferguson2013} where multiple
administrators, control applications, or end-host
applications may want to modify the network policy
independently at the same time, and where a conflict-free installation must be
found.

We assume that we are provided with a procedure
to assemble sequentially arriving policy updates in one  (semantically
sound) \emph{composed} policy (\eg, using the formalism of~\cite{netkat}).
Therefore, we address here the challenge of composing
\emph{concurrent} updates, while preserving a property known
as \emph{per-packet consistency}~\cite{network-update}.
Informally, we must guarantee that every packet traversing the network must be processed by
exactly one global network policy, even
throughout the interval during which the policy is updated --- in this case,
each packet is processed either using the policy in place prior to the update,
or the policy in place after the update completes, but never a mixture of the
two.
At the same time, we need to resolve conflicts among policy updates that
cannot be composed in a sequential execution. We do this by allowing
some of the requests to be \emph{rejected}, but requiring that
no data packet is affected by a rejected update.

%\smartparagraph{Our Contributions.}
%[[PK already said
%This paper initiates the study of the design of a robust SDN control plane
%from a distributed computing perspective.
Our first contribution is a formal model
of SDN under fault-prone, concurrent control.
%]]
%[[PK already said
%As a case study,
We then focus on the problem of
%]]
\emph{per-packet consistent updates}~\cite{network-update}, and introduce the
abstraction of \emph{Consistent Policy Composition (CPC)}, which offers a
\emph{transactional} interface to address the issue of conflicting policy
updates. We believe that the CPC abstraction, inspired by the popular paradigm
of software transactional memory (STM)~\cite{stm-st95}, exactly matches the
desired behavior from the network operator's perspective, since it captures
the intuition of a correct sequential composition combined with optimistic
application of policy updates.

%[[PK recall Tag/Path
We then discuss different protocols to solve the CPC problem.
We present a \emph{wait-free} CPC algorithm, called {\CPO},
%[[PK ingress ports never mentioned, also is this the main characteristics
%which orders policies at the ingress ports.
%]]
which allows the controllers to directly apply their updates on the
data plane and resolve conflicts as they progress installing the updates.
While {\CPO} tolerates any number of faulty
controllers 
%[[PK because it is wait-free
%\mcnote{why any number?} 
%]]
and does not require them to be strongly synchronized
(thus improving concurrency of updates),
it incurs a linear \emph{tag complexity} in the number of to-be-installed policies (and hence in the worst-case
 exponential in the network size).
%We introduce the notion of \emph{tag complexity} to
%evaluate the performance of a protocol, and
We then present a more sophisticated protocol called {\DPO}, which
applies the replicated state-machine approach to implement
a total order on to-be-installed 
%\mcnote{why installed? isn't it to-be-installed?} 
policy updates.
Assuming that at most $f$ controllers can fail, we show that {\DPO}
achieves an optimal tag complexity $f+2$.

To the best of our knowledge, this work initiates
an analytical study of a \emph{distributed} and \emph{fault-tolerant} SDN control plane.
We keep our model intentionally simple
and we  focus on a restricted class of forwarding
policies, which was sufficient to highlight
% plane from
%a distributed-computing perspective. Generally,
intriguing connections between our SDN model and conventional
distributed-computing models, in particular, STM~\cite{stm-st95}.
One can view the SDN data plane as a shared-memory data
structure, and the controllers can be seen as read/write processes,
modifying the forwarding rules applied to packets at each switch.
The traces of packets constituting the data-plane workload can be seen as
``read-only'' transactions, reading the
forwarding rules at a certain switch in order to ``decide'' which switch state to read
next.
Interestingly, since in-flight packets cannot be dropped (if it is not intended to do so) or delayed, these
read-only transactions must always commit, in contrast with policy update transactions.

%[[PK phrasing
In general, we believe that our work can inform the networking community about what can and
cannot be achieved in a distributed control plane.
We also derive a minimal requirement on the SDN model without which CPC is impossible to
solve.
%]]
%[[PK computing?
%For the networking community,
From the distributed-computing perspective,
%our work opens an interesting field: as
we show that the SDN model exhibits concurrency phenomena not
yet observed in classical distributed systems.
For example, even if the controllers can synchronize their actions using
consensus~\cite{Her91},  complex interleavings between the
controllers' actions and packet-processing events prevent them from implementing CPC with
constant tag complexity (achievable using one reliable controller).
%comes with several new characteristics which are not
%well-understood yet from a distributed computing perspective.
%]]

\smartparagraph{Roadmap.} We introduce our
SDN model in Section~\ref{sec:model}. Section~\ref{sec:problem} formulates the CPC problem
and Section~\ref{sec:solution} describes our CPC solutions and their
complexity bounds. We discuss related work in
Section~\ref{sec:related} and conclude in Section~\ref{sec:conc}.
Proof sketches are given in the Appendix.

\section{Distributed Control Plane Model}\label{sec:model}

We consider a setting where different users (\ie, policy authors or administrators) can issue
policy update requests to the distributed SDN control plane.
%Each request will be received by (resp.~replicated to) any non-faulty process. \mcnote{What does it mean to say: resp.~replicated to?}
We now introduce our SDN model as well as the policy concept in more detail.

\vspace{1mm}\noindent\textbf{Control plane.}
The distributed \emph{control plane} is modelled as a set of $n\geq 2$ \emph{controllers}, $p_1,\ldots,p_n$.
The controllers are subject to \emph{crash} failures: a
faulty controller stops taking steps of its algorithm.
The controller that never crashes is called \emph{correct} and we assume
that there is at least one correct controller.
We assume that controllers can communicate among themselves (\eg, through an out-of-band management network) in a reliable but asynchronous
(and not necessarily FIFO) fashion, using message-passing. Moreover,
the controllers have access to a consensus
abstraction~\cite{FLP85} that allows them to implement, in a fault-tolerant manner,
any replicated state machine, provided its sequential
specification~\cite{Her91}.
The consensus abstraction can be obtained, \eg, assuming
the eventually synchronous communication~\cite{DDS87} or
the \emph{eventual leader} $\Omega$ failure detector~\cite{CHT96}
shared by the controllers, assuming a majority of correct controllers
or the \emph{quorum} failure detector $\Sigma$~\cite{DFG10}.
%\ssnote{Say something how this consensus abstraction can be realized
%in practice?}

\vspace{1mm}\noindent\textbf{Data plane.}
Following~\cite{network-update}, we model the \emph{network data plane}
%as a packet processor: a \emph{network} is modeled
as a set $P$ of \emph{ports} and a set $L\subseteq P\times P$ of directed
\emph{links}.
%A link $(i,j)\in L$ is called \emph{outgoing} for port $i$ and
%\emph{incoming} for port $j$.
As in~\cite{network-update}, a hardware switch is represented as
  a set of ports,  and a physical bi-directional link between two
  switches $A$ and $B$ is represented as a set of \emph{directional}
  links,  where each port of $A$ is connected to the port of $B$
  facing $A$ and every port of $B$ is connected to the port of $A$
  facing $B$.
%Let $L(i)$, $i\in P$, denote the set of \emph{neighbors} of $i$, \ie,
%$\{j \in P | (i,j)\in L\}$.
We additionally assume that $P$ contains two distinct ports,
$\textsf{World}$ and $\textsf{Drop}$,     
which represent forwarding a packet to the outside of the network (\eg, to an end-host or upstream provider) and
dropping the packet, respectively. 
A port $i\notin\{\textsf{World},\textsf{Drop}\}$ that has no \emph{incoming} links, \ie, $\nexists j\in P$:
$(j,i)\in L$  is called \emph{ingress}, 
otherwise the port is called \emph{internal}.
Every internal port is connected to $\textsf{Drop}$ (can drop
packets). A subset of ports are connected to $\textsf{Drop}$ (can
forward packets to the outside of the network).
$\textsf{World}$ and $\textsf{Drop}$ have no outgoing links: $\forall
i\in \{\textsf{World},\textsf{Drop}\},\;\nexists j\in P$:
$(i,j)\in L$.
%\mcnote{What is World for? What is connected to World? Is it every ingress port connected to World?}

The workload on the data plane consists of a set $\Pi$ of \emph{packets}.
(To distinguish control-plane from data-plane communication,
we reserve the term \emph{message} for a communication involving at least one controller.)
%and the term \emph{packet} for data plane communication.
In general, we will use the term \emph{packet} canonically as a type~\cite{network-update},
\eg, describing all packets (the packet \emph{instances} or \emph{copies}) matching a certain header;
when clear from the context, we do not explicitly distinguish between
packet types and packet instances.
%\pknote{Is this type distinction important?} \ssnote{I think makes sense and we should keep it}
%
%For instance, a policy usually refers to packet \emph{types} whereas per-packet consistency refers to packet \emph{instances}.
%\pknote{unclear...}

\vspace{1mm}\noindent\textbf{Port queues and switch functions.}
The \emph{state} of the network is characterized by a \emph{port
  queue} $Q_i$ and a \emph{switch function} $S_i$ associated with
every port $i$.
A port queue $Q_i$ is a sequence of packets that are, intuitively, waiting to be processed at port $i$.
%\mcnote{The definition of switch function is not very clear to me. I would think that it reads from a queue, processes the packet and places it to another queue. Also, located packet seems like an important concept but it is only introduced in an indirect way.}
A switch function is a map $S_i:\;\Pi\rightarrow \Pi\times P$,
that, intuitively, defines how packets in
the port queue $Q_i$ are to be processed.
When a packet $\textit{pk}$ is fetched from port queue $Q_i$, the corresponding \emph{located
  packet}, \ie, a pair $(\textit{pk}',j)=S_i(\textit{pk})$ is computed
and the packet $\textit{pk}'$ is placed to the queue $Q_j$.

We represent the switch function at port $i$, $S_i$, as a collection of
\emph{rules}. Here a rule $r$ is a partial map $r:\Pi\rightarrow \Pi\times P$
that, for each packet $pk$ in its domain $\textit{dom}(r)$, generates a new
located packet $r(pk)=(pk',j)$, which results in $pk'$ put in queue $Q_j$ such
that $(i,j)\in L$. Disambiguation between rules that have overlapping domains
is achieved through priority levels, as discussed below.
%[[PK sounds good to me, commented for the 1st submission
%\mcnote{Marco to revise this.}
%

%[[PK not really needed?
%We assume that every rule $r\in S_i$ is \emph{applicable to a port $i$}, \ie,
%$r$ only puts packets to the queues of the ports directly connected to $i$:
%$\forall pk\in\textit{dom}(r)$ and $(pk',j)=r(pk)$, $(i,j)\in L$.
%We assume that each port is configured to process every
%packet: $\bigcup_{r\in S_i} \dom(r) =\Pi$.
%]]

We assume that only a part of a packet $pk$ can be modified by a rule: namely,
a header field called the \emph{tag} that carries the information that is used
to identify which rules apply to a given packet.
%[[PK never used
%We denote the tag of a packet $pk$ by $\tau(pk)$
%and denote by
%$|\tau|$ the number of bits used for tags.
%]]
%[[PK dropped for the moment, probably not needed
%For simplicity, we assume that only ingress port may contain rules
%modifying the packets. The internal ports are only allowed to forward
%packets without modifying them.
%]]

%\pknote{the notation $|\tau|$ is strange}

\vspace{1mm}\noindent\textbf{Port operations.}
We assume that a port supports an \emph{atomic} execution of a \textit{read}, \textit{modify-rule}
and \textit{write} operation: the rules of a port can be atomically read and, depending
on the read rules, modified and written back to the port.
%We will refer to such a port as an \emph{\textit{rmw}-port}.
Formally, a port $i$ supports the operation:
$\textit{update}(i,g)$, where $g$ is a function defined on the
sets of rules.
The operation atomically reads the state of the port, and then, depending
on the state, uses $g$ to update it and return a response.
For example, $g$ may involve adding a new forwarding rule or
a rule that puts a new tag $\tau$ into the headers
of all incoming packets.

%[[PK
\ignore{
In Section~\ref{sec:discussion}, we give a valency proof that
under a weaker port model of atomic reads and writes,
the CPC Problem is impossible to solve.
}

%We assume for any set of rules $r,r'\in S_i$ installed
%on the port, there is a deterministic way of resolving internal conflicts.
%we have $\dom(r)\cap\dom(r')=\emptyset$, or
%$\dom(r)\subsetneq\dom(r')$, or $\dom(r')\subsetneq\dom(r)$.

%As we define formally below,
%a packet arriving at a port is processed
%according to the highest priority rule that applies to the packet
%(includes the packet in its domain).
%Therefore, any update of the port's state is taking effect only if the
%resulting set of rules is internally consistent.
%Respectively, we assume that in the initial state of the network, the state of every
%port is internally consistent.

\vspace{1mm}\noindent\textbf{Policies and policy composition.}
Finally we are ready to define the fundamental notion of network policy.
A \emph{policy} $\pi$ is defined by a \emph{domain} $\dom(\pi)\subseteq
\Pi$, a \emph{priority level} $\pr(\pi) \in\mathbb{N}$, and a unique
\emph{forwarding path}, \ie, a loop-free sequence of piecewise connected ports,
for each ingress port that should apply to the packets in its domain
$\textit{dom}(\pi)$. Formally, for each ingress port $i$ and each packet
$pk\in\dom(\pi)$ arriving at port $i$, $\pi$ specifies a sequence of distinct
ports $i_1,\ldots,i_s$ that $pk$ should follow, where $i_1=i$, $\forall
j=1,\ldots,s-1$, $(i_j,i_{j+1})\in L$ and
$i_s\in\{\textsf{World},\textsf{Drop}\}$.
%
%[[PK confusing at this point
%$\CPOs(\pi)=\{\pi^{(1)},\ldots,\pi^{(s)}\}$,
%]]
%Throughout this paper,
The last condition means that each packet following the path
eventually leaves the network or is dropped.
%[[PK redundant?
%Also, we assume that policy paths are \emph{loop-free}.
%]]

We call two policies $\pi$ and $\pi'$ \emph{independent} if
$\dom(\pi)\cap \dom(\pi')= \emptyset$.
Two policies $\pi$ and $\pi'$ \emph{conflict} if they are not
independent and $\pr(\pi)=\pr(\pi')$.
%[[PK not used?
%We say that a policy $\pi$ \emph{refines} policy $\pi'$ if
%$\dom(\pi)\subseteq \dom(\pi')$.
%]]
%
Now a set $U$ of policies is \emph{conflict-free} if
no two policies in $U$ conflict.
%$\pi,\pi'\in S$, $\pi$ and $\pi'$ are independent, or if $|\pr(\pi)-\pr(\pi')|>0$,
%or $|\pr(\pi)-\pr(\pi')|=0$
%and $\pi$ refines $\pi'$ or vice versa.
Intuitively, the priority levels are used to establish the order among
non-conflicting policies with overlapping domains: a packet
$pk\in\dom(\pi)\cap \dom(\pi')$, where $\pr(\pi)>\pr(\pi')$,  is
processed by policy $\pi$.
Conflict-free policies in a set $U$ can therefore be \emph{composed}:
a packet arriving at a port is applied the highest priority policy
$\pi\in U$ such that
$\textit{pk}\in\dom(\pi)$.
%[[PK the footnote does not help much imho.
%\footnote{For pedagogical reasons, we chose the
%  sequential composition procedure to be as simple as possible.
%However, our algorithms work for arbitrary policy compositions
%as long as concurrent policy updates commute.~\cite{frenetic}}
%]]

%[[PK matters for weaker port model
%We assume that if $S_i$ is not conflict-free and the packet is subject
%to two conflicting rules, then the port is
%configured to apply some default deterministic rule to
%select one of these rules to be applied to the packet. \ssnote{Do we need this?}
%]]

\vspace{1mm}\noindent\textbf{Modelling traffic.}
The traffic workload on our system is modelled using \textit{inject}
and \textit{forward} events defined as follows:
\begin{itemize}
\item $\textit{inject}(pk,j)$: the environment injects a packet $pk$
  to an ingress port $j$ by adding $pk$ to the end of queue $Q_j$,
  \ie, replacing $Q_j$ with $Q_j\cdot pk$.
\item $\textit{forward}(pk,j,pk',k)$, $j \in P$:
%[[PK resolved now
%\mcnote{why say $\notin\{\textsf{World},\textsf{Drop}$? btw, are these in $P$?}
%]]
  the first packet in $Q_j$ is processed
  according to $S_j$, \ie, if $Q_j=pk.Q'$, then $Q_j$ is
  replaced with $Q'$ and  $Q_k$ is
  replaced with $Q_k\cdot pk'$, where $r(pk)=(pk',k)$ and $r$ is the
  highest-priority rule in $S_j$ that can be applied to $pk$.
%Recall that if there is a conflict, \ie, two different rules
%applicable to $pk$ have the highest priority, then the conflict is
%resolved using some deterministic procedure. \ssnote{Do we need this?}
\end{itemize}

%\vspace{1mm}
\noindent\textbf{Algorithms, histories,  and problems.}
Each controller $p_i$ is assigned with an \emph{algorithm}, \ie, a state machine that
$(i)$ accepts invocations of high-level operations, $(ii)$ accesses ports with
\textit{read-modify-write} operations,
%\textit{read},
%\textit{add-rule} or \textit{remove-rule} operations,
$(iii)$ communicates with other controllers, and $(iv)$ produces high-level responses.
%A \emph{problem} is a set $\mathcal H$ of desirable histories.
The distributed algorithm generates a sequence of \emph{executions}
consisting of port accesses, invocations, responses, and packet forward events.
Given an execution of an algorithm, a \emph{history} is the sequence of externally observable events, \ie, $\textit{inject}$
and $\textit{forward}$ events, as well as
invocations and responses of controllers' operations.

We assume an asynchronous \emph{fair} scheduler and \emph{reliable} communication channels
between the controllers: in every infinite execution, no message starves in a port queue without being
served by a \emph{forward} event, and
every message sent to a controller is eventually received.
%\mcnote{Forward only applies to data plane packets.}

A \emph{problem} is a set $\mathcal P$ of histories.
An algorithm solves a problem $\mathcal P$ if the history of its every
execution is in $\mathcal P$.
An algorithm solves $\mathcal P$ \emph{$f$-resiliently} if the property above holds in
every $f$-resilient execution, \ie, in which at most $f$ controllers
take only finitely many steps.
An $(n-1)$-resilient solution is sometimes called \emph{wait-free}.

\vspace{1mm}\noindent\textbf{Traces and trace consistency.}
In a history $H$, every packet injected to the network
generates a \emph{trace}, \ie, a sequence of located packets: each event
$ev=\textit{inject}(pk,j)$ in $E$ results in $(pk,j)$ as the first element of
the sequence, $\textit{forward}(pk,j,pk_1,k_1)$ 
%\mcnote{What is this? We defined it as
%$\textit{forward}(pk,j,pk',k)$} 
adds $(pk_1,j_1)$ to the trace, 
%\mcnote{unclear what $(pk_1,j_1)$ is}, 
and each next $\textit{forward}(pk_k,j_k,pk_{k+1},j_{k+1})$ extends
the trace with $(pk_{k+1},j_{k+1})$, unless
$j_k\in\{textsf{Drop},\textsf{World}\}$ in which case we say that the
trace \emph{terminates}.
Note that in a finite network an infinite trace must contain a cycle.
%[[PK yes, but a trace may not respect any policy in general
%\mcnote{Not sure what this sentence about cycle adds. We already said we assume paths do not have cycles.}
%]]
%
Let $\rho_{ev,H}$
%(resp. $\rho_{ev,E}$ )
denote the trace corresponding
to an inject event $ev=\textit{inject}(pk,j)$ a history $H$.  %in an execution $E$.
Trace $\rho=(pk_1,i_1),(pk_2,i_2),\ldots$ is
\emph{consistent with a policy $\pi$} if
$pk_1\in \dom(\pi)$ and $(i_1,,i_2,\ldots) \in \pi$.
%[[PK decided to keep it: it refers to per-packet consistency at the
%end, I think it's ok
%\mcnote{Replace consistent with A trace respects a policy.}
%]]

\vspace{1mm}\noindent\textbf{Tag complexity.} It turns out that what can and what cannot be achieved
by a distributed control plane depends on the number of available tags,
used by control protocols to distinguish packets that should be processed by different policies.
%[[PK commented for the moment
%\mcnote{It is very odd that tag has not been introduced and this talks about tag complexity.
%We need to explain the role of the tag in the policy description.}
%]]
Throughout this
paper, we will refer to the number of different tags used by a protocol as the
\emph{tag complexity}. W.l.o.g., we will typically assume that tags are integers $\{0,1,2,\ldots\}$,
and our protocols seek to choose low tags first; thus, the tag complexity is usually the largest
used tag number $x$, throughout the entire (possibly infinite) execution of the protocol and in the
worst case. Observe that a protocol of tag complexity $x$ requires $\lfloor\log{x}\rfloor+1$ bits
in the packet header.
%[[PK ne need really, in fact bits are never mentioned again
%; accordingly, in this paper, we will sometimes refer to the \emph{number of (tag) bits}
%needed by a protocol.
%\mcnote{Why do we care about the number of bits?}
%]]

\vspace{1mm}\noindent\textbf{Monitoring oracle.} In order to be able to reuse tags, the control plane needs some feedback
from the network about the \emph{active policies}, \ie, for which policies there are still packets in transit.
We use an oracle model in this paper: each controller can query the oracle to learn about the tags currently
in use by packets in any queue. Our assumptions on the oracle are minimal, and oracle interactions can be asynchronous.
In practice, the available tags can simply be estimated by assuming a rough upper bound on the transit time of
packets through the network.

%============================================================
\section{The CPC Problem}\label{sec:problem}
%============================================================

Now we formulate our problem statement.
At a high level, the CPC abstraction of consistent policy composition
accepts concurrent \emph{policy-update requests} and makes sure that
the requests affect the traffic as a \emph{sequential composition} of
their policies.
The abstraction offers a transactional interface
where requests can be \emph{committed} or \emph{aborted}.
Intuitively, once a request commits, the corresponding policy
affects every packet in its
domain that is subsequently injected.
%\mcnote{wait. isn't it every packet that enters the network after the request is committed?}
But in case it cannot be composed with the currently
installed policy, it is \emph{aborted} and does not affect a
single packet. On the progress side, we require that if a set of policies conflict, at least one
policy is successfully installed.
%\ssnote{How is this property called officially?}
%\ssnote{Shall we say that
%for now, we assume policy updates commute? if so, we may also say that policies will ``last forever'', as
%they cannot be removed (only overridden with higher priority?)}
%\pknote{I'd put it as a possible extension}
%[[PK It is not part of the pb statement, it's part of the
%model. Besides ``immediately'' sounds strange
%We require that each packet arriving at a port is forwarded \emph{immediately};
Recall that inject and forward events are not under our control, \ie, the packets cannot
be delayed, \eg, until a certain policy is installed.
%]]
%[[PK need to say at some point what we mean by per-packet consistency
Therefore, a packet trace that interleaves with a policy update
must be consistent with the policy \emph{before} the update or the policy \emph{after}
the update (and not some partial policy), the property is
referred to as \emph{per-packet consistency}~\cite{network-update}). 
%]]
%%policies, either completely including the concurrent update or
%%completely excluding it  
%(the property referred to as \emph{per-packet consistency}~\cite{network-update}). 

\vspace{1mm}\noindent\textbf{CPC Interface.}
Formally, every controller $p_i$ accepts requests
$\textit{apply}_i(\pi)$, where $\pi$ is a policy,
and returns $\ack_{i}$ (the request is committed) or $\nack_{i}$
(the request is aborted).

We specify a partial order relation on the events in a history $H$,
denoted $<_H$.
We say that a request $\textit{req}$ \emph{precedes} a request
$\textit{req}'$ in a history $H$, and we write $\textit{req} <_H
\textit{req}'$, if the response of
$\textit{req}$ appears before the invocation of $\textit{req}'$ in $H$.
If none of the requests precedes the other, we say that the requests
are \emph{concurrent}.
Similarly, we say that an inject event $ev$ \emph{precedes} (resp.,
\emph{succeeds}) a request $\textit{req}$ in $H$, and we write $ev <_H
\textit{req}$ (resp., $\textit{req} <_H ev$),  if $ev$ appears after the
response (resp., before  the invocation) of
$\textit{req}$ in $H$.
Two inject events $ev$ and $ev'$ on the same port in $H$
are related by $ev <_H ev'$ if
$ev$ precedes $ev'$ in $H$.

An inject event $ev$ is concurrent with
$\textit{req}$ if $ev \not<_H \textit{req}$ and $\textit{req} \not<_H ev$.
A history $H$ is \emph{sequential} if in $H$, no two requests are
concurrent and no inject event is concurrent with a request.

Let $H|p_i$ denote the \emph{local} history
of controller $p_i$, \ie, the subsequence of $H$ consisting of all events
of $p_i$.
We assume that every controller is \emph{well-formed}:
every local history $H|p_i$ is sequential, \ie,
no controller accepts a new request before producing a response to the
previous one.
A request issued by $p_i$ is \emph{complete} in $H$ if it is followed by a
matching response ($\ack_i$ or $\nack_i$) in $H|p_i$ (otherwise it is
called \emph{incomplete}).
A history is \emph{complete} if every request is complete in $H$.
A \emph{completion} of a history $H$ is a complete history $H'$ which is like
$H$ except that  each incomplete request in $H$
is completed with $\ack$ (intuitively, this is necessary if the
request already affected packets) or $\nack$ inserted
somewhere after its invocation.
%[[PK why this?
%(We require that our implementation is correct
%also for infinite histories and even if a failed process does not return an ACK or NACK.)
%]]
%
Two histories $H$ and $H'$ are \emph{equivalent} if $H$ and $H'$ have
the same sets of events,  for all $p_i$, $H|p_i=H'|p_i$, and for all
inject events $ev$ in $H$ and $H'$, $\rho_{ev,H}=\rho_{ev,H'}$.

%For an inject event $ev$ in $H$, let $H|ev$ denote the subsequence of
%$H$ that only consists of the path $\rho_{ev,H}$.

\vspace{1mm}\noindent\textbf{Sequentially composable histories.}
A sequential complete history $H$ is \emph{legal} if the following two
properties are satisfied:
%(1) the set of committed policies in $H$ is conflict-free, and
%[[PK a new version
(1) a policy
is committed in $H$ if and only if
%[[PK must be if and only if for consistency ]]
it does not conflict with the set of
policies previously committed in $H$, and
%\ssnote{(2) a maximal subset of conflicting
%policies is installed (\ie conflicting policies cannot simply be
%rejected all)}, and (3)
(2) for every inject event
$ev=\textit{inject}(pk,j)$ in $H$,  the trace $\rho_{ev,H}$  is consistent
with the composition of all committed policies that precede
$ev$ in $H$.
%]]

%, \ie, the most specific policy $\pi$ committed before $ev$
%in $H$ such that $pk\in\dom(\pi)$.

\begin{definition}[Sequentially composable history]\label{def:history}
We say that a complete history $H$ is \emph{sequentially composable}
if there exists a legal sequential history $S$ such that (1) $H$ and
$S$ are equivalent, and (2)  $<_H\subseteq <_S$.
%\begin{itemize}
%\item $H$ and $S$ are equivalent, and
%\item $<_H\subseteq <_S$.
%\end{itemize}
\end{definition}
Intuitively, Definition~\ref{def:history} implies that the traffic in
$H$ is processed \emph{as if} the requests were applied atomically and
every injected packet is processed instantaneously (per-packet consistency).
The legality property here requires that only committed requests affect the traffic.
Moreover, the equivalent sequential history $S$ must respect the
order in which non-concurrent requests take place and packets arrive in $H$.

\begin{definition}[CPC]\label{def:CPC}
We say that an algorithm solves the problem of Consistent Policy
Composition (CPC) if for its every history $H$, there exists a
completion $H'$ such that:
\begin{description}

\item[Consistency.] $H'$ is sequentially composable.

%[[PK not needed anymore
%\item[Progress.] If a request $\textit{req}=\textit{apply}_i(\pi)$ completes with
%  $\nack_i$ in $H'$, then there exists  $\textit{req}'=\textit{apply}_j(\pi)$
%  in $H'$,  concurrent  or committed before the response of
%  $\textit{req}$ such that $\pi$ and $\pi'$ are conflicting
%  ($\{\pi,\pi'\}$ is not conflict-free). %\ssnote{would be nice to have at least one accepted?}
%]]

\item[Termination.]  If $H$ is infinite, then every correct controller $p_i$
  that accepts a requests $\textit{apply}_i(\pi)$, eventually returns ($\ack_i$ or
  $\nack_i$) in $H$.
\end{description}
\end{definition}
%[[PK
%\ssnote{Should we add the requirement that at least one out of conflicting policies are installed? Or does
%this belong to the legal history?}
%]]
%
%
%

\begin{figure}[t]
\centering
\subfigure[]{
	\includegraphics[width=.28\columnwidth]{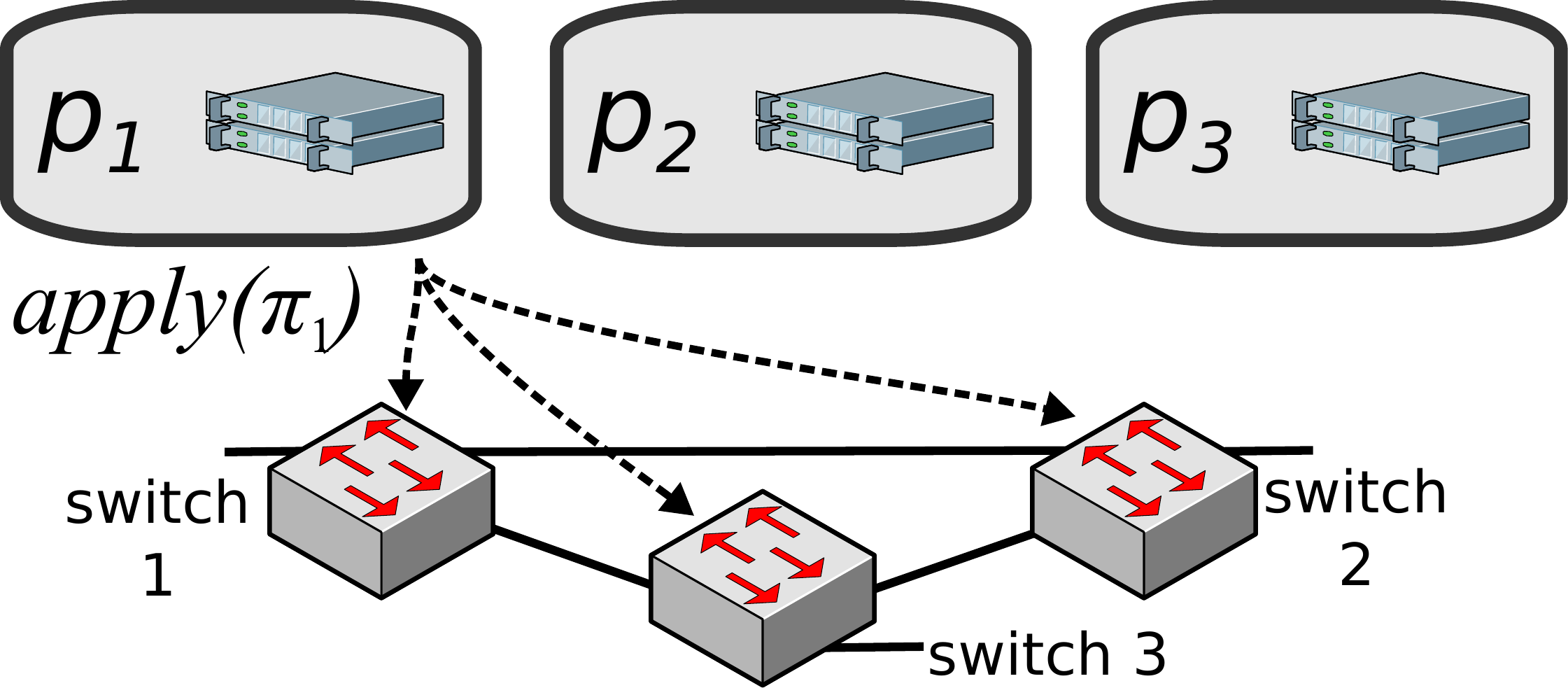}
	\label{fig:example1}
}
\subfigure[]{
	\includegraphics[width=.68\columnwidth]{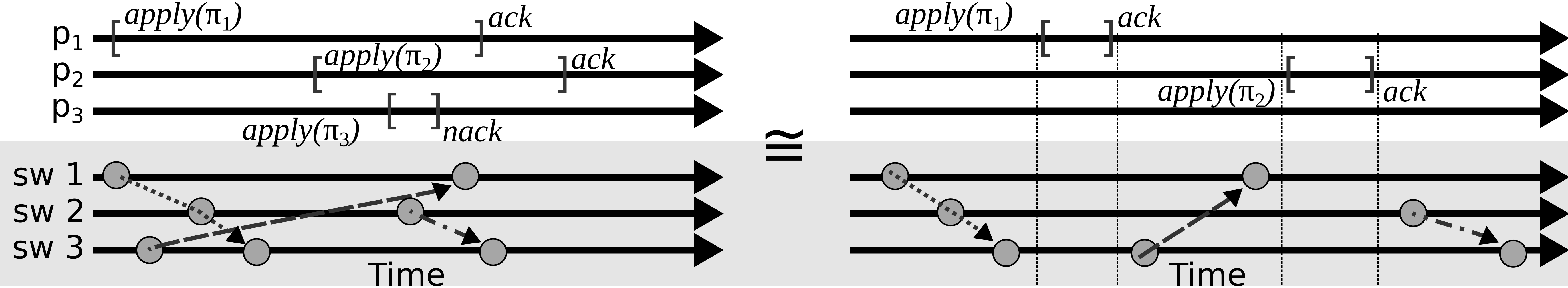}
	\label{fig:example2}
}
\vspace{-1.0em}
\caption{\small Example of a policy composition with a $3$-controller control
  plane and $3$-switch data plane (a).
%[[PK too wordy? redundant?
%The function of the controllers is to  accept policy-update requests issued by the control
%applications and try installing the updates on the switches.
%]]
The three controllers try to
concurrently install three different policies $\pi_1$, $\pi_2$, and
$\pi_3$. We suppose that  $\pi_3$ is conflicting with both
$\pi_1$ and $\pi_2$, so $\pi_3$ is aborted (b).
Circles represent data-plane events (an \emph{inject} event followed
by a sequence of forward events).
Next to $H$ we depict its ``sequential equivalent''
$H_S$. In the sequential history, no two requests
are applied concurrently and no request is rejected.}
\label{fig:examples}
\vspace{-1.0em}
\end{figure}

Note that, for an infinite history $H$, the Consistency and Termination
requirements imply that an incomplete request in $H$ can only cause
aborts of conflicting requests for a finite period of time: eventually it
would abort or commit in a completion of $H$ and if it aborts, then no
subsequent conflicting requests will be affected.
As a result we provide an all-or-nothing semantics:
a policy update, regardless of the behavior of the controller that
installs it,  either eventually takes effect or does not affect a single packet.
Figure~\ref{fig:examples} gives an example of a sequentially
composable history.
% (A more detailed
%example can be found in~\cite{STN-disc13}).

%============================================================
\section{CPC Solutions and Complexity Bounds}\label{sec:solution}
%============================================================

We now discuss how the CPC problem can be solved and analyze the
complexity its solutions incur. We begin with a simple wait-free
algorithm $\CPO$ which implicitly orders policies at a given ingress port;
$\CPO$ incurs a linear tag complexity in the number of to-be-installed policies.
Then we present an $f$-resilient algorithm $\DPO$ with
tag complexity $f+2$.
%, where $f$ is the upper bound on the number  of
%faulty controllers.
We also show that {\DPO} is optimal, \ie, no
protocol can maintain smaller tags for all networks.

\subsection{$\CPO$: Per-Policy Tags}\label{ssec:path-algo}

The basic idea of $\CPO$ is to encode each possible forwarding path in
the network by its own tag.
%that any policy may use, by its own tag.
Let $\tau_k$ be the tag representing the $k^{\mathit{th}}$ possible path.
$\CPO$  assumes that, initially, for each internal port $i_x$
%(\ie, for each port which is not an ingress port)
which lies on the $k^{\mathit{th}}$ path, a rule $r_{\tau_k}(pk)=(pk,
i_{x+1})$ is installed, which forwards \emph{any packet} tagged
$\tau_k$ and forwards the corresponding packet to the path's
successive port $i_{x+1}$.
%[[Pk confusing and not important
%Note that $\CPO$ does not modify the initial rules at the
%internal ports.
%\mcnote{What are the initial rules?}
%]]

Upon receiving a new policy request $\pi$ and before installing any rules,
a controller $p_i$ executing $\CPO$ sends a message to \emph{all}
other controllers
%[[PK
%$p_j$ ($j\neq i$)
%]]
informing them about the rules it intends to add to the
ingress ports; every controller receiving this message
rebroadcasts it (making the broadcast reliable), and starts installing the policy on $p_i$'s behalf.
This ensures that every policy update that started affecting the
traffic eventually completes.
% This \emph{policy message} is to ensure fault-tolerance: whenever a process
%$p_j$ receives such a message, it starts installing the policy $\pi$
%of $p_i$ as well  (in addition to its own policies).
%[[PK the intuition is not clear at this point
%Intuitively, as a result, every policy update that affects the traffic is
%e ventually completed and serialized in an equivalent legal sequential history.
%]]
Let $i_1,\ldots,i_s$ be the set of ingress port, and $\pi^j$ be the
path specified by policy $\pi$ for ingress port $i_j$, $j=1,\ldots,s$.
%and let $i_1,\ldots,i_s$ be the respective ingress ports of the paths
%$\pi^{(1)},\ldots,\pi^{(s)}$.
To install $\pi$, $\CPO$ seeks to add a rule to each ingress port $i_j$; this rule tags all
packets matching the policy domain with the tag describing the path $\pi^j$.
However, since different policies from different controllers
may conflict, %(\eg, have different priorities), [[PK different?]]
every controller updates the ingress ports in a pre-defined order.
%[[PK shorter, $\prec$ is only used in the proof
%imposes a strict order $\prec$
%on all ingress ports: all processes will update the ingress ports (using atomic \textit{update}) one-by-one according
%to $\prec$.
%]]
Thus, conflicts are discovered already at the lowest-order port, and the conflict-free all-or-nothing installation of a policy
is ensured.
%[[PK not important
%(The order $\prec$ may be based on switch or port identifiers.)
%]]

%\ignore{
%\begin{proof}
%Let us first describe $\CPO$ more formally. Let $\tau_k$ be the unique tag of the $k^{\mathit{th}}$ path. First, for each internal port $i_x$ (\ie, for each %port which is not an ingress port)
%which lies on the $k^{\mathit{th}}$ path, $\CPO$ installs a rule $r_{\tau_m}(pk)=(pk, i_{x+1})$ which matches \emph{any packet} with this tag
%$\tau_k$, and forwards the corresponding packets to the port $i_{x+1}$ which is given by path.
%Subsequently and before installing its policy $\pi$, a process $p_i$ executing $\CPO$ also informs \emph{all} other processes $p_j$ ($j\neq i$) about
%the rules it intends to add to the ingress ports. Whenever a process $p_j$ receives such a message, it starts installing the policy $\pi$ of $p_i$ as well (in addition %to its own policies).
%Process $p_i$ then iterates over all ingress ports $p$ according to the fixed order $\prec$, and installs the following rule using a \textit{rmw} operation: for each %packet $pk$ matching the domain of the policy, the path tag $\tau_i$ is attached
%and forwarded to the first port $j$, \ie, $r(pk)=(\tau(pk) \leftarrow \tau_i,j)$. In case the ingress port $p$ already contains a rule of a conflicting or equivalent %policy, the
%rule is not installed.
%%
%
%\hfill $\Box$ \end{proof}
%}
%
Observe that $\CPO$ does not require \emph{any} feedback from
the network on when packets arrive or leave the system. It just tags
all traffic at the network edge;
%\footnote{In this light,
%  $\CPO$ is similar to the Multiprotocol Label Switching
%  (MPLS) protocol frequently used for traffic engineering in the
%  Internet.\pknote{do we need this footnote?}}
internally, the packets are only forwarded
according to these tags.
%, and no other header or payload bits are matched.

We have the following theorem.
\begin{theorem}\label{thm:tags}
$\CPO$ solves the CPC problem in the wait-free manner, without
  relying on the oracle and consensus objects.
%In the \textit{rmw}-port model, the CPC problem can be solved $(n-1)$-resiliently.
%even for $n-1$ failures where $n$ is the number of processes.
\end{theorem}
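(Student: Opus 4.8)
The plan is to verify the two properties of Definition~\ref{def:CPC} --- Consistency and Termination --- for the history of every execution of $\CPO$, exhibiting explicitly the completion $H'$ and the legal sequential witness $S$. First I would argue Termination: since a controller reliably broadcasts its intended ingress-port rules before touching any port, and every recipient rebroadcasts and then installs the update on the originator's behalf, any update that has been ``announced'' is eventually completed by some correct controller (one exists by assumption); the per-port accesses are $\textit{read-modify-write}$ operations that either detect a conflict at the lowest-order ingress port and abort, or proceed --- in both cases the invoking (correct) controller eventually returns $\ack_i$ or $\nack_i$. Because the ingress ports are updated in a fixed global order, two conflicting installations cannot both pass the first shared port, so at least one of any conflicting set commits, and no update blocks forever waiting on another.

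For Consistency I would construct $H'$ by completing each incomplete request with $\ack$ if its tagging rule was already installed at the first ingress port (hence it may already have affected packets) and with $\nack$ otherwise. Then I would linearize: place each committed/aborted request at the point where its rule is (atomically) written to, or its conflict detected at, the lowest-order ingress port it touches; order concurrent inject events by the order in which their packets were tagged at the edge. This yields a sequential $S$ with $<_H\ \subseteq\ <_S$. Legality of $S$ then has two parts. Part (1), that a policy commits iff it does not conflict with previously committed ones: this follows because the fixed-order ingress update makes the lowest-order shared port a serialization point --- the $\textit{read-modify-write}$ there sees exactly the set of already-installed (committed) tagging rules, and aborts precisely on a conflict. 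Part (2), per-packet consistency: a packet is tagged exactly once at its ingress port, atomically, with the tag $\tau_k$ of a single path; thereafter the pre-installed path rules $r_{\tau_k}$ forward it deterministically along $\pi^j$ to $\textsf{World}$ or $\textsf{Drop}$, so its trace $\rho_{ev,H}$ is consistent with the single policy whose rule was in place at the ingress port when the packet was fetched --- which, by the placement of inject events in $S$, is the composition of the committed policies preceding $ev$ in $S$.

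The main obstacle I expect is the careful treatment of the ordering points and the atomicity argument for part (2): one must show that no packet ever observes a ``mixed'' state, i.e., that because only the ingress ports are ever modified and each modification is a single atomic $\textit{update}(i,g)$, a packet's entire downstream behavior is fixed the instant it is tagged, and that an incomplete update which has tagged at the first ingress port but not yet at a later one cannot produce an inconsistent trace --- here one uses that the path rules $r_{\tau_k}$ for \emph{every} path are present from the start, so tagging at even one ingress port already routes the packet correctly end-to-end. A secondary subtlety is confirming that the chosen placement of inject events in $S$ is consistent with $<_H$ on same-port injects and with request ordering; this is routine given that tagging at the edge happens in queue order. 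Finally I would note that $\CPO$ never queries the oracle (it never reuses tags) and never invokes consensus (the reliable broadcast plus the fixed port order suffice), which establishes the ``without relying on the oracle and consensus objects'' clause and the wait-freedom (the construction tolerated up to $n-1$ crashes throughout).
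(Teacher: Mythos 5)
Your proposal is correct and follows essentially the same route as the paper's own (much terser) proof sketch: serialization of updates via the pre-defined order on ingress ports (conflicts caught at the lowest-order port), all-or-nothing completion via the reliable rebroadcast of intended ingress rules, and per-packet consistency from the fact that a packet receives a single immutable tag at the edge whose pre-installed path rules fix its entire route. Your write-up simply makes explicit the completion $H'$ and the sequential witness $S$ that the paper leaves implicit.
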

However, while providing a correct network update even under high
control plane concurrency and failures, $\CPO$ has a large
tag complexity, namely linear in the number of to-be-installed
policies (which may grow to super-exponential in the network size).
If we want to reduce the tag overhead, we should be able to
\emph{reuse} tags that are not needed anymore.
%[[PK this is not the case with our algorithms, it is rather
%%  proportional to the number of failures, but it's hard to motivate 
%Ideally, the number of tags should only depend on the number 
%of concurrently installed policies.
%]]

\subsection{$\DPO$: Optimal Tag Complexity}

The $\DPO$ protocol sketched in Figure~\ref{fig:lin}
%solves the CPC Problem with tag
%complexity $f+2$, where $f$ is the maximal number of processes that
%may fail. It
allows controllers to use up to $f+2$ tags dynamically and in a coordinated fashion.
As we will also show in this section, there does not exist any
solution with less than $f+2$ tags.
Note that in the fault-free scenario ($f=0$), only one bit can be used
for storing the policy tag.

\vspace{1mm}
\noindent
\textbf{State machine.}
The protocol is built atop a replicated state machine
(implemented, \eg, using the
construction of~\cite{Her91}) that imposes a global
order on the policy updates and ensures a coordinated use and reuse of
the protocol tags.
%Without loss of generality,
For simplicity, we assume that the policies in the updates are uniquely identified.
%[[PK redundant
%Processes use the consensus abstraction to implement, in a fault-tolerant manner,
%the replicated state machine based on its sequential
%specification (see~\cite{Her91} for a construction).
%]]

The state machine we are going to use in our algorithm, and which we
call {\PS} (for \emph{Policy Serialization})
exports, to each controller $p_i$, two operations:

%\begin{itemize}
%\item 
\vspace{1.5mm}
\noindent $\bullet$
$\textit{push}(i,\pi)$, where $\pi$ is a policy, that always returns
  $\texttt{ok}$; % (a constant); %\ssnote{do we need to return
                 % something at all?}
%\item 

\vspace{0.5mm}
\noindent $\bullet$ $\textit{pull}(i)$ that returns $\bot$  or a tuple
  $(\pi,\textit{tag})$,
  where $\pi$ is a policy and
  $\textit{tag}\in\{0,\ldots, f+1\}$.
 %\ssnote{check values},
 % and $\textit{resolved}\in\{0,1\}$ is a binary variable.
%\end{itemize}
%
\vspace{1.5mm}

Intuitively, $p_i$ invokes $\textit{push}(i,\pi)$ to put policy $\pi$
in the queue of policies waiting to be installed;
and $p_i$ invokes $\textit{pull}(i)$ to fetch the next policy
to be installed. The invocation of \textit{pull} returns $\bot$ if all policies pushed so
far are already installed and there is an ``available'' tag (to be
explained below), otherwise it returns a tuple
$(\pi,\textit{tag})$, informing $p_i$ that policy $\pi$ should be
equipped with $\textit{tag}$.
%If $\textit{resolved}$ is true
% then at least one controller completed installing the
%  policy already and, as we shall see, the tag of the previously
%  installed policy can be reused.

%\end{itemize}
%
%Now we define the sequential behavior of {\PS}.
Let $S$ be a sequential execution of {\PS}.
Let $\pi_1,\pi_2,\ldots$ be the sequence of policies proposed in $S$
as arguments of the $\textit{push}()$ operations (in the order of appearance).
Let $(\pi_{i,1},\tau_{i,1}),(\pi_{i,3},\tau_{i,2}),\ldots$
be the sequence of non-$\bot$ responses to $\textit{pull}(i)$
operations in $S$ (performed by $p_i$).
If $S$ contains exactly $k$ \emph{non-trivial} (returning non-$\bot$ values)
$\textit{pull}(i)$ operations, then we say that $p_i$ \emph{performed $k$
non-trivial pulls in $S$}.
If $S$ contains $\textit{pull}(i)$ that returns $(\pi,t)\neq\bot$,
%where $t\neq \bot$ and $b\in\{0,1\}$,
followed by a subsequent $\textit{pull}(i)$, then we
say that $\pi$ is \emph{installed} in $S$. %\ssnote{I would prefer `installed' to 'resolved'. Petr?}

We say that $\tau_k$ is \emph{blocked} at the end of a finite history $S$
if $S$ contains $\textit{pull}(i)$ that
returns $(\pi_{k+1},\tau_{k+1}, 0)$ but does not contain a subsequent
$\textit{pull}(i)$.
In this case, we also say that $p_i$ \emph{blocks} tag $\tau_{k}$ at the
end of $S$.
Note that a controller installing policy $\pi_{k+1}$ blocks the tag associated
with the \emph{previous} policy $\pi_{k}$ (or the initially installed policy
in case $k=0$).
Now we are ready to define the sequential specification of {\PS}
via the following requirements on~$S$:
%(recall that
%$\pi_1,\pi_2,\ldots$ is the sequence of policies proposed in $S$):
%\begin{description}
%\item[Non-triviality:]

\vspace{1.5mm}
\noindent $\bullet$ \textbf{Non-triviality:}
If $p_i$ performed $k$ non-trivial pulls, then
a subsequent $\textit{pull}(i)$ returns $\bot$ if and only if
the pull operation  is preceded by at most $k$ pushes or  $f+1$ or more
policies are blocked in $S$.
%Also, $\textit{pull}(i)$ returns $(\pi_k,t,0)$ if and only if $\pi_k$
%is not yet installed in the prefix of $S$ ending with the pull operation.
In other words, the $k$-th pull of $p_i$ must return some
policy if at least $k$ policies were previously pushed and at most $f$
of them are blocked.
%Also, the flag \emph{resolved} associated with $\pi$ can be true if
%and only if at least one controller already asked for the subsequent policy
%in the queue.
%As we shall see, in our algorithm, this would mean that the controller
%has previously installed~$\pi$.

\vspace{1.5mm}
%\item[Agreement:]
\noindent $\bullet$ \textbf{Agreement:}
For all $k>0$, there exists $\tau_k\in\{0,\ldots,f+1\}$ such that if
controllers $p_i$ and $p_j$ performed $k$
nontrivial pulls, then $\pi_{i,k}=\pi_{j,k}=\pi_k$ and
$\tau_{i,k}=\tau_{j,k}=\tau_k$ for some $\tau_k$.
Therefore, the controllers compute the same order in which the
proposed policies must be installed, with the same sequence of tags.

\vspace{1mm}
%\item[Tag validity:]
\noindent $\bullet$ \textbf{Tag validity:}
For all $k$, $\tau_k$ is the minimal value in $\{0,\ldots,f+1\}-\{\tau_{k-1}\}$ that
is not blocked in $\{0,\ldots,n-1\}$ when the first $\textit{pull}(i)$
operation that returns $(\pi_k,\tau_k,r_k)$ is performed.
The intuition here is that the tags are chosen deterministically based
on all the tags that are not currently blocked.
Since, by the Non-triviality property, at most $f$ policies are blocked
in this case, $\{0,\ldots,f+1\}-\{\tau_{k-1}\}$ is non-empty.

\begin{wrapfigure}{r}{0.5\textwidth}
%\begin{figure}[tbp]
\hrule \vspace{-2mm}
 {\small
\begin{tabbing}
 bbb\=bb\=bb\=bb\=bb\=bb\=bb\=bb \=  \kill
Initially:\\
\> $\seq := \bot$;
     $\cur_i := \bot$\\
\\
\textbf{upon} \textit{apply}$(\tilde \pi)$\\
\nnll\> $cur_i := \tilde \pi$\\
\nnll\label{line:push}\> $\PS.push(i,\tilde \pi)$\\
\\
\textbf{do forever}\\
\nnll\label{line:pull}\> \textbf{wait until} $\PS.pull(i)$ returns $(\pi,t)\neq\bot$\\
\nnll\label{line:checkconflict}\> \textbf{if} ($\seq$ and $\pi$ conflict) \textbf{then}\\
\nnll\>\> $res := nack$\\
\nnll\> \textbf{else} \\
\nnll\>\> $\seq := \seq.(\pi,t)$\\
\nnll\label{line:wait3}\>\> \textbf{wait until} $\Tag(|\seq|-1)$ is not used \\
\nnll\label{line:install}\>\> \textit{install}$(\seq)$\\
\nnll\>\> $res := ack$\\
\nnll\label{line:result}\>\textbf{if} $\pi=\cur_i$ \textbf{then}
return $res$; $\cur_i:=\bot$
\end{tabbing}
\vspace{-2mm}
\hrule
}
\caption{The $\DPO$ algorithm: pseudocode for controller $p_i$.}
\label{fig:lin}
\end{wrapfigure}

%\end{description}
\vspace{2mm}
\noindent
In the following, we assume that a \emph{linearizable} $f$-resilient
implementation of  {\PS} is available~\cite{HW90}: any concurrent history of
the implementation is, in a precise sense, equivalent to a
sequential history that respects the temporal relations on operations
and every operation invoked by a correct controller returns, assuming
that at most $f$ controllers fail.
Note that the {\PS} machine establishes a total order on %composable
policies $(\pi_1,\textit{tag}_1), (\pi_2,\textit{tag}_2),\ldots$,
which we call the \emph{composition order} (the policy requests that
do not compose with a prefix of this order are ignored).

\vspace{1mm}
\noindent
\textbf{Algorithm operation.}
The algorithm is depicted in Figure~\ref{fig:lin} and operates as follows.
To install policy $\tilde \pi$, controller $p_i$ first
pushes $\tilde \pi$ to the policy queue by invoking ${\PS}.\textit{push}(i,\tilde\pi)$.

In parallel, to install its policy and help the others, the controller runs the
following task (Lines~\ref{line:pull}-\ref{line:result}).
First it keeps invoking  ${\PS}.\textit{pull}(i)$ until a
(non-$\bot$) value $(\pi_k,\tau_k)$ is returned (Line~\ref{line:pull}); here $k$ is the number
of nontrivial pulls performed by $p_i$ so far.
The controller checks if $\pi_k$
is not conflicting with previously installed policies (Line~\ref{line:checkconflict}), stored in
sequence $\textit{seq}$.
%Also if $r_k$ is $\textit{true}$ (\ie, $\pi_k$ has been already installed by another),
%$p_i$ simply waits until the result of $\pi_k$'s installation is
%received (Line~\ref{line:wait2}).
Otherwise, in Line~\ref{line:wait3}, $p_i$ waits until the traffic in the network only
carries tag $\tau_{k-1}$ (the tag $\tau_{k-2}$ used by the penultimate policy in $\seq$,
denoted $\Tag(|\seq|-1)$).
Here $p_i$ uses the \emph{oracle} (described in
Section~\ref{sec:model}) that produces the set of
currently active policies.
%Here we assume that initially all the traffic
%in the network is tagged with $\tau_0=0$ used by the initial policy
%$\pi_0$.

Then the controller tries to install $\pi_k$
  on all internal ports followed by the ingress ports, one by one, in
  a pre-defined order, employing the
``two-phase update'' strategy of~\cite{network-update} (Line~\ref{line:install}).
%Here the ingress ports should be updated to apply
%  a new tag-change rule to all injected packets in $\dom(\pi_k)$ (Line~\ref{line:install}).
The update of an internal port $p$ is performed using an atomic operation
that adds the rule associated with $\pi_k$ equipped with $\tau_k$ to the set of
rules currently installed on $p$.
The update on an ingress port $p$ simply replaces the currently
installed rule with a new rule tagging the traffic with $\tau_k$ which succeeds \emph{if and
only if} the port currently carries the policy tag $\tau_{k-1}$
(otherwise, the port is left untouched).
Once all ingress ports are updated, old rules are removed, one by
one, from the internal ports.
%Finally, the result of the installation is broadcast to all other
%controllers (Line~\ref{line:bcast}).
If $\pi_k$ happens to the policy currently proposed by $p_i$, the result is returned to the application.

Intuitively, a controller blocking a tag $\tau_k$ may still be involved in
installing $\tau_{k+1}$ and thus we cannot reuse $\tau_k$ for a policy other
than $\pi_k$. Otherwise, the slow controller may wake up and update a
port with an outdated rule.
But since a slow or faulty controller can block at most one tag, there
eventually must be at least one available tag in $\{0,\ldots,f+1\}-\{\tau_{k-1}\}$
when the first controller performs its $k$-th nontrivial pull.
In summary, we have the following result.
\begin{theorem}\label{thm:lin}
$\DPO$ solves the CPC Problem $f$-resiliently with tag complexity
  $f+2$ using $f$-resilient consensus objects.
\end{theorem}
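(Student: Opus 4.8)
The plan is to split the statement into three independent claims---correctness of the \PS{} building block (which is the only place consensus is needed), the tag-complexity bound, and the two CPC properties Termination and Consistency---and to handle per-packet consistency by reducing, one composition-order step at a time, to the two-phase update discipline of~\cite{network-update}.

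First I would check that the sequential specification of \PS{} (Non-triviality, Agreement, Tag validity) is consistent and deterministic: whenever Non-triviality forces a non-$\bot$ answer, at most $f$ tags are blocked, so $\{0,\dots,f+1\}\setminus\{\tau_{k-1}\}$---which has $f+1$ elements, all lying in $\{0,\dots,f+1\}$ by induction on $k$---contains an unblocked value, and Tag validity selects the least such value deterministically. Hence \PS{} is a well-defined state machine, and by the universal construction of~\cite{Her91} driven by the $f$-resilient consensus objects it admits a linearizable, $f$-resilient implementation~\cite{HW90}; this is the sole use of consensus. The tag-complexity claim is then immediate: by Agreement and Tag validity every tag ever returned by a $\textit{pull}$, hence every tag ever written to a port by $\DPO$ on Line~\ref{line:install}, lies in $\{0,\dots,f+1\}$, so at most $f+2$ distinct tags occur and $\lfloor\log(f+1)\rfloor+1$ header bits suffice; together with the matching lower bound established separately in this section, this is optimal.

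For Termination, fix an infinite history $H$ and a correct controller $p_i$ invoking $\textit{apply}_i(\tilde\pi)$. By $f$-resilience of \PS{}, the $\PS.\textit{push}$ on Line~\ref{line:push} returns, so $\tilde\pi=\pi_m$ for some index $m$ in the composition order, and it suffices to show $p_i$ completes $m$ non-trivial pulls, since the $m$-th of them returns $(\pi_m,\tau_m)$ by Agreement and then Line~\ref{line:result} returns. A pull by $p_i$ that has so far done $c<m$ non-trivial pulls returns $\bot$ only if at most $c$ pushes occurred---impossible, since $\pi_m$ is pushed and $m>c$---or at least $f+1$ tags are blocked; but a blocked tag corresponds to a controller strictly between two of its non-trivial pulls, a correct such controller eventually performs its next pull once its oracle-guarded $\textit{wait until}$ on Line~\ref{line:wait3} and its finite $\textit{install}$ on Line~\ref{line:install} terminate, and a faulty controller blocks at most one tag. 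The $\textit{wait until}$ terminates because by fair scheduling no packet starves in a queue and, once the relevant policy has begun installing, no controller re-injects the tag being drained, so the oracle eventually reports it free. Hence at most $f$ tags remain blocked forever, every pull of $p_i$ eventually returns non-$\bot$, and $\textit{apply}_i(\tilde\pi)$ returns.

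Finally, for Consistency I would complete each unfinished $\textit{apply}_i(\pi)$ with $\ack$ if some controller executed $\textit{install}$ for $\pi$ and with $\nack$ otherwise, and build the witness legal sequential history $S$ by ordering the requests along the composition order $\pi_1,\pi_2,\dots$ of \PS{} (well defined by Agreement), committing $\pi_k$ exactly when it does not conflict with the already-committed prefix---which is precisely the test on Line~\ref{line:checkconflict} against $\seq$, giving legality condition~(1)---and then inserting each inject event $ev$ of $H$ immediately after the request whose tag the packet of $ev$ actually carried. Since a $\nack$ed policy never writes an ingress rule, only committed tags ever appear on packets; and since $\DPO$ performs the two-phase update of~\cite{network-update}---internal ports first, then the atomic conditional flip of each ingress port from $\tau_{j-1}$ to $\tau_j$, then removal of the old internal rules, with the Line~\ref{line:wait3} wait ruling out any stale rule of a reused tag matching an in-flight packet---every packet is tagged exactly once and follows the path of the highest-priority committed policy in its domain among those whose ingress updates preceded it, yielding legality condition~(2). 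The inclusion $<_H\,\subseteq\,<_S$ then follows from linearizability of \PS{} and the fact that an $\ack$ed request has all its ingress rules in place before returning. The main obstacle is exactly this last step: arguing that the single tag a packet carries pins down one committed policy consistent with the packet's entire trace even while tags are concurrently being reused by slow controllers---this is where the ``a slow or faulty controller blocks at most one tag'' invariant, the oracle-guarded wait, and the atomicity of the conditional ingress update must be combined, and it is the genuinely new ingredient beyond the single-controller per-packet-consistency argument of~\cite{network-update}.
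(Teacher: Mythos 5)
Your overall route is the same as the paper's: Termination by arguing that at most $f$ tags can be blocked forever (a crashed controller blocks at most one tag, a correct one eventually issues its next pull once the oracle-guarded wait and the finite install terminate, the wait terminating because all ingress ports already tag with the previous committed tag and old packets drain) and then invoking Non-triviality; Consistency by taking the \PS{} composition order as the witness sequential history, using the conflict check against $\seq$ for legality, and the two-phase update plus the tag-blocking discipline for per-packet consistency. Your extra observations (well-definedness of the \PS{} specification, and the trivial fact that only tags in $\{0,\dots,f+1\}$ are ever written, giving tag complexity $f+2$) just make explicit what the paper leaves implicit.

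One step, however, fails as stated: the rule ``insert each inject event $ev$ immediately after the request whose tag the packet of $ev$ carried.'' Consider a request $\textit{apply}_i(\pi_m)$ that is aborted: $p_i$ returns $\nack$ after its $m$-th nontrivial pull, at which moment the ingress ports still carry the tag $\tau_{j^*}$ of the last committed policy $\pi_{j^*}$ with $j^*<m$. A packet injected after this $\nack$ carries $\tau_{j^*}$, so your rule places its inject event before the aborted request $\pi_m$ in $S$, even though $\pi_m <_H ev$; this violates $<_H\subseteq <_S$, and your own justification of the real-time inclusion indeed only covers $\ack$ed requests. The repair is local: place $ev$ after the later of (a) the request of the policy whose tag it carries and (b) every request that precedes $ev$ in $H$. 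One must then check that only \emph{aborted} requests can separate positions (a) and (b), so that legality condition (2) is untouched; this follows because an acknowledged committed policy has all ingress ports carrying its tag or a later one before it responds, and because the controller that aborts $\pi_m$ has itself installed every committed policy of smaller index beforehand, so any committed request responding before $ev$ has index at most that of $ev$'s tag. With that adjustment your argument goes through and coincides with the paper's proof.
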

A natural optimization of the $\DPO$ algorithm is to allow a
controller to broadcast the outcome of each complete policy
update. This way ``left behind'' controllers can catch up with the more
advanced ones, so that they do not need to re-install already installed policies.

Note that since in the algorithm, the controllers maintain a
total order on the set of policy updates that respects the order, we can easily extend it to encompass \emph{removals} of previously
installed policies.
To implement removals, it seems reasonable to assume that a removal request for a policy
$\pi$ is issued by the controller that has previously installed $\pi$.

The tag complexity of $\DPO$ is, in a strict sense, optimal.
Indeed, we now show that there exists no $f$-resilient CPC algorithm that uses
$f+1$ or less tags in any network.
By contradiction, for any such algorithm we construct a
network consisting of two ingress ports connected to $f$ consecutive
loops.
We then consider $f+2$ composable policies, $\pi_1,\ldots,\pi_{f+2}$,
that have overlapping domains but prescribe distinct paths.
%[[PK incorrect
%
Given that only $f+1$ tags are available, we can construct an
execution of the assumed algorithm in which one of $f$ ``slow but believed to faulty controllers'' wakes up 
and invalidates a more recently installed policy that uses the same
tag, contradicting the Consistency property of CPC.
%Given that only $f+1$ tags are available, we can construct an
%execution in which a policy update installing
%$\pi_i$ invalidates the previously installed $\pi_j$ by using the same
%tag, contradicting the Consistency property of CPC.
%]]
Thus:
%[[ No space to give a convincing sketch]]

\begin{theorem}\label{th:tagbound}
For each $f\geq 1$, there exists a network such that
any $f$-resilient CPC algorithm using %an arbitrary number of
$f$-resilient consensus objects
has tag complexity at least $f+2$.
\end{theorem}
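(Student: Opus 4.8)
The plan is an indistinguishability argument in the style of classical impossibility proofs, combined with a per-packet-consistency argument on packets that are kept in flight. Suppose, for contradiction, that $A$ is an $f$-resilient CPC algorithm (using $f$-resilient consensus objects) whose tag complexity is at most $f+1$ on every network. I would instantiate $A$ on the following adversarial network: two ingress ports $a$ and $b$, each feeding a chain of $f$ ``loop gadgets'' placed in series before traffic is allowed to reach $\textsf{World}$. A loop gadget is a short cycle of internal ports on which a packet can be made to circulate arbitrarily long; since the fair scheduler only requires that every enqueued packet is eventually served by \emph{some} \textit{forward} event, repeatedly forwarding a packet \emph{around} its loop respects fairness while keeping the packet --- and hence its tag --- active forever. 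I would then fix $f+2$ policies $\pi_1,\dots,\pi_{f+2}$ with a common non-empty domain, strictly increasing priorities (so they are pairwise non-conflicting and composable, the composition of any installed subset being governed, on the common domain, by the highest-indexed installed policy), and forwarding paths that share a common initial segment through the loop gadgets but are pairwise divergent afterwards. The two ingress ports let a controller be frozen after it has updated $a$ but not $b$, i.e.\ with its policy ``half installed,'' and probe packets injected at the not-yet-updated ingress port are what render the frozen controllers indistinguishable from crashed ones.

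The execution is built in stages. For $k=1,\dots,f$, controller $p_k$ runs in isolation and invokes $\textit{apply}_k(\pi_k)$; since $\pi_k$ conflicts with no previously committed policy, Consistency and Termination force $A$ to commit it with some tag $\tau_k$, and while it does so I inject a probe packet $m_k$ at an ingress port, route it into the loops, and from then on keep it circulating forever carrying $\tau_k$. I freeze $p_k$ at the moment it has ``claimed'' $\tau_k$ by installing rules for it on internal ports but has not yet finished all its port updates. After these $f$ stages the frozen $p_1,\dots,p_f$ are indistinguishable from crashed controllers, so $A$, being $f$-resilient, must let the remaining correct controller make progress; that controller successively invokes $\textit{apply}(\pi_{f+1})$ and then $\textit{apply}(\pi_{f+2})$, and both must commit. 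The key tag-counting claim, which I would establish by induction on the stages, is that each of $\pi_1,\dots,\pi_{f+1}$ must receive a tag distinct from all earlier ones: if $A$ ever reused an already-active tag $\tau_j$ for a later policy $\pi_\ell$, then installing $\pi_\ell$ forces $A$ to make internal ports route $\tau_j$-tagged traffic along $\pi_\ell$'s (divergent) path --- and since a rule cannot distinguish two packets carrying the same tag, the still-circulating $m_j$, which has so far traversed only the common prefix, would next be forwarded along $\pi_\ell$'s path rather than $\pi_j$'s, yielding a trace consistent with neither the composition before nor after the $\pi_\ell$ request, contradicting Consistency. With only $f+1$ tags available, $\pi_{f+2}$ is therefore forced to reuse a tag $\tau_j$ with $1\le j\le f$ that the frozen controller $p_j$ is still holding.

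To finish, once $\pi_{f+2}$ commits with tag $\tau_j$, the ingress ports tag all packets of the common domain with $\tau_j$ and the internal ports route $\tau_j$-traffic along $\pi_{f+2}$'s path. The adversary now wakes $p_j$, which resumes and completes its long-obsolete installation of $\pi_j$, rewriting at least one internal-port rule for $\tau_j$ back toward $\pi_j$'s path. From that point, a freshly injected packet of the common domain follows $\pi_{f+2}$'s path up to the rewritten port and then diverges onto $\pi_j$'s path; this trace is consistent with no single committed policy, and in particular with neither the composition before nor after any request concurrent with it, violating sequential composability and hence the assumed correctness of $A$. Note that $f$-resilient consensus does not help: consensus lets the controllers agree on the composition order and on which tag to use, but the violation is caused by physical rules and in-flight packets in the data plane, which no amount of controller agreement can repair --- which is exactly why the bound matches the $f+2$ tag complexity of $\DPO$ from Theorem~\ref{thm:lin}.

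I expect the main obstacle to be the indistinguishability bookkeeping: choosing precisely where each $p_k$ is frozen so that (a) it has already modified enough ports that the network state coincides with an execution in which it crashed mid-install, yet (b) on waking it still performs a port update that demonstrably corrupts a later policy; and dovetailing this with scheduling the probe packets $m_k$ so that each is, at the critical moment, positioned on a loop before the relevant divergence port, so that reuse of its tag provably re-routes it onto a trace consistent with no policy. Getting the loop / priority / path design of the network to make all of these divergence statements simultaneously true --- while keeping the fair scheduler honest and every controller's local history well-formed --- is the technical heart of the proof, and it is also where the need for exactly $f$ loops (one ``straggler'' per loop, plus two ingress ports for the half-installed states) becomes apparent.
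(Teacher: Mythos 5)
Your overall architecture (freeze $f$ controllers mid-installation so they look crashed, force the surviving controller to exhaust the tag space, then wake a frozen controller to corrupt the reused tag) matches the paper's strategy, but the mechanism you use to force the $f+1$ distinct tags in the first place has a genuine gap. Your tag-counting induction rests on probe packets $m_k$ that ``circulate forever'' in loop gadgets so that their tags stay active. In this model the adversary does not control where packets go, only when \textit{forward} events fire: each forward event applies the highest-priority installed rule, and policies (and their compositions) are by definition loop-free paths ending at $\textsf{World}$ or $\textsf{Drop}$. Under the fair scheduler every queued packet is eventually forwarded, so every probe exits after finitely many events; no packet's tag can be kept active forever. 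Worse, even a long-but-finite delay does not force the algorithm to reuse a busy tag: the monitoring oracle lets a controller simply wait until the old tag drains before reusing it, and this wait is finite, so Termination is not violated. In-flight packets can therefore only delay reuse, never forbid it --- this is exactly why \textsc{ReuseTag} itself gets away with $f+2$ tags. The resource that genuinely blocks a tag forever is different: it is the \emph{pending atomic update} of a frozen controller, which the algorithm cannot wait out (the controller may really have crashed, and waiting would violate Termination/$f$-resilience). The paper's induction makes every old tag $\tau$ permanently unusable because the frozen controller $q_{\tau+1}$ still holds an ingress-port update that, on a port currently tagging with $\tau$, overwrites it with the outdated tag $\tau+1$; and it rules out reusing the \emph{currently} active tag via a separate transient argument (a packet sharing the tag would take the lower path in two loops or in none, matching no composition). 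Your proposal attributes both facts to the circulating probes, so the induction does not go through as stated.

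A second, related problem is your closing violation: you wake $p_j$ and have it ``rewrite an internal-port rule for $\tau_j$ back toward $\pi_j$'s path.'' Ports support atomic read-modify-write, so a careful algorithm can make $p_j$'s pending update conditional on the state it expects; at an internal port the state after $\tau_j$ is reused for $\pi_{f+2}$ is \emph{different} (the $\tau_j$-rule now points along $\pi_{f+2}$'s divergent path), so the pending update can detect the change and abort harmlessly. The paper avoids this by engineering the network (two ingress ports feeding $f+1$ loops, one rule per tag at each ingress) so that when the tag is reused the \emph{ingress} state is literally indistinguishable from the state the frozen controller was about to act on; only then is its conditional update forced to misfire and retag fresh traffic with an obsolete tag. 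This indistinguishability-of-port-state step --- which you yourself flag as the unresolved ``technical heart'' --- is precisely what is missing, and it is where the argument must be anchored at the ingress ports rather than at internal ports kept busy by probes.
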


%============================================================
\section{Related Work}\label{sec:related}
%============================================================

%
%[[PK not sure we need to be pedagogical here
%The implementation of consistent and concurrent read/write operations is one of the most fundamental tasks in distributed computing.
%Accordingly, our work builds upon classic problems such as consensus
%but also software transactional networks. Hence, in the following,
%we will first put our work into perspective with respect to the distributed computing literature.
%Subsequently, we will review related work in the field of Software Defined Networking.
%]]
%

% While the need for distributed SDN control planes is widely acknowledged in the
% networking community (\eg,~\cite{onix}),

%In the following, we review related work in the area of distributed computing,
%and then discuss related Software-Defined Networking literature.

\noindent\textbf{Distributed Computing.}
There is a long tradition of defining correctness of a concurrent system via
an equivalence to a sequential one~\cite{Pap79-serial,Lam79,HW90}.  The notion
of sequentially composable histories is reminiscent of
linearizability~\cite{HW90}, where a history of concurrently applied operations
is equivalent to a history in which the
operations are in a sequential order, respecting their real-time precedence.
Our sequentially composable histories impose requirements not
only on high-level invocations and responses, but also on the way the traffic
is processed. We require that the committed policies constitute a
conflict-free sequential history, but, additionally,  we expect that each
packet trace is consistent with a prefix of this history,
%[[PK more precise
%consisting of
containing all requests that were committed before the packet was injected.
%]]

The transactional interface exported by the CPC abstraction is inspired by the
work on speculative concurrency control using software transactional memory
(STM)~\cite{stm-st95}.
%, thus the  term \emph{software transactional networking}.
Our interface is however intended to model realistic network
management operations, which makes it simpler than more recent models of
dynamic STMs~\cite{dstm}.
%In particular, the sets of rules to be installed by a policy update do not depend on the state of the
%network.
\ignore{Extending the interface to dynamic policies that adapt their behavior based on
the current network state sounds like a promising research direction.  On the
other hand, our criterion of sequential composability is more complex than
traditional STM correctness properties in that it imposes restrictions not only
on the high-level interface exported to the control plane, but also on the
paths taken by the data-plane packets.}
%[[Pk not exactly true
%Also, we assumed that processes are subject to failures, which is usually not
%assumed by STM implementations.
%]]

%[[PK too wordy
\ignore{
The concurrent and consistent installation of policies specifying forwarding
paths in a network can be seen as a transactional problem: each policy specifies a list of
operations on switch rules (such as read, modify and write), where the list may be partially ordered
(\eg, new tags must be installed inside the network before the corresponding rules are added to the ingress
ports, \eg~\cite{network-update}). While there exists a large body of literature on
\emph{Software Transactional Memory (STM)}~\cite{stm-st95} systems that provide a useful abstraction for
complex concurrent read/write updates on \emph{shared data structures}, we are not aware
of any literature on a corresponding pendant for networks. An
important difference between our ``Software Transactional Networking''
model and classic STMs  is the additional semantics introduced by the
per-packet consistent forwarding criteria, and the message passing paradigm
both on the control and the data plane.
%In particular, in contrast to most STM literature, in our model we
%also require that in case of two policies may conflict, at least one of them is consistently installed;
%unnecessary aborts should be avoided.
Moreover, the SDN control problem has a novel feature of ``leveled''
concurrency, where concurrency on the level
of configuration changes on the switches competes with traffic the traversing
the network.
%[[PK completely irrelevant}
%For a good introduction and some efficient implementations of an  STM
%``middleware'', we refer the reader to the DSTM~\cite{dstm} and
%TL-II~\cite{tl2} papers.
}

\ignore{
Our impossibility result in Section~\ref{sec:async} is inspired by the
$1$-resilient consensus impossibility proof~\cite{FLP85}. But our
notion of valency is defined with our two-layer nature of
concurrency in mind, which makes the reasoning quite different.
}
%\textbf{Our impossibility proofs rely on bivalency concepts
%introduced by Lynch \etal~ However, in contrast: Petr, could you write this? Thanks..}%

%\textbf{todo petr maybe add relations to: long-lived renaming, group-mutual exclusion, ...}

\noindent\textbf{Software Defined Networking.}
%[[PK to save space
%There are already several (typically more systems oriented) papers
%motivating the need for a distributed control plane.
%]]
At the heart of Software-defined networking (SDN)
lies the decoupling of the system that makes decisions about where traffic is sent (the \emph{control plane})
from the underlying systems that forward traffic to the selected destination (the \emph{data plane}).
For an introduction to SDN as well as for a discussion of the differences to concepts such as active networks
(where packets carry code and which are hard to formally verify)
and protocols such as MPLS (which do not come with a software control plane and which do not allow
users to specify even basic consistency properties),
we refer the reader to~\cite{fabric}.
We believe that our distributed controller can be used together with other link virtualization technologies
which support tagging. 

Onix~\cite{onix} is among the earliest distributed SDN controller platforms.
Onix applies existing distributed systems techniques to build a Network
Information Base (NIB), \ie, a data structure that maintains a copy of the
network state, and abstracts the task of network state distribution from
control logic. However, Onix expects developers to provide the logic that is
necessary to detect and resolve conflicts of network state due to concurrent
control. In contrast, we study concurrent policy composition mechanisms that
can be leveraged by any application in a general fashion.

%It has been argued that distributing
%control is useful to reduce the latency of reactive control~\cite{ctrl-place},
%as certain SDN controllers (and hence functionality) can be placed closer to
%the switches~\cite{kandoo}.
%[[PK do not see how this is relevant
%Kotronis \etal~proposed a novel inter-domain routing
% architecture where the ``BGP''-route selection is out-sourced~\cite{outsourcing} to an external entity.
%]]
\ignore{
\mcnote{Removing to save space. We add it back for camera ready}
The costs of implementing logically centralized network control, providing
various levels of consistency, over a distributed system are subjected to a
sensitivity study in~\cite{log-cent}.
}
%[[PK to save space and stay in focus
%In~\cite{log-cent}, Levin~\etal~studied  the implications of the logically centralized abstraction
%provided by SDN and the resulting design choices inherent to the strong or
%eventual network view consistency. The authors demonstrate certain consequences of
%this design choice on the performance of a distributed load-balancing control
%application. Regarding wide-area SDN networks under multiple
%administrative controls, researchers have also pointed out
%interesting connections to the field of locality-sensitive computing.~\cite{hotsdn13loc}
%]]

For the case of a single controller, Reitblatt \etal~\cite{network-update}
formalized the notion of per-packet consistency and introduced the problem of
\emph{consistent network update}.
%, and described the \emph{two-phase update} technique, also used in our algorithms.
Mahajan and Wattenhofer~\cite{roger-hotnets} introduced several new variants
of network update problems, and presented more efficient, dependency-based
protocols. We complement this line of research by assuming a distributed
computing perspective, and by investigating robust and concurrent policy
installations. Our work also introduces the notion of tag complexity.
\ignore{
\mcnote{Removing to save space. We add it back for camera ready}
(The solutions in~\cite{roger-hotnets} require, in the asynchronous network,
an unbounded number of policy tags.)

Detection and prevention of SDN control plane (software) failures is considered
one of the main challenges facing SDNs. An early approach to systematic fault
reproduction was presented in~\cite{ofrewind}. Later work~\cite{nice} leverages
symbolic execution to systematically tackle the fault-tolerance aspect of the
software defined network control plane. Regarding automated policy composition,
our work builds upon the results Foster \etal~\cite{frenetic,pyretic}.
Indeed, the parallel policy composition of Pyretic (operator ``$|$'') is a
building block in the prototype implementation of our control plane.
}

\noindent
\textbf{Bibliographic Note.} In our SIGCOMM HotSDN workshop
paper~\cite{hotsdn13ccc}, we introduced the notion of software transactional
networking, and sketched a tag-based algorithm to consistently compose
concurrent network updates. However, the algorithm proposed there
is not robust to any controller failure, and features an exponential
tag complexity. (A simple corollary of the present paper is that the non-failure
setting can be solved with two tags only.)
%[[PK not sure it is true
%and is
%not robust to any controller failure.
%[[PK does not seem relevant
%(A protocol solving this problem
%with tag complexity two is given in Appendix~\ref{ssec:bit}.)
%]]

%provided by a network update middleware, and there is no
%formal discussion of what can and cannot be achieved under
%distributed network control.
%Moreover,~\cite{hotsdn13ccc} does not consider failures or asynchronous executions.
% The algorithm in~\cite{hotsdn13ccc} features an exponential tag complexity and
% is not robust to any controller failure.

%============================================================
\section{Concluding Remarks}\label{sec:conc}
%============================================================

We believe that our paper opens a rich area for future research,
and we understand our work as a first step towards a better understanding of
how to design and operate a robust SDN control plane.
% under different failure
%models.
%
As a side result, our model allows us to gain insights into minimal requirements
on the network that enable consistent policy updates: \eg, in Appendix~\ref{sec:weaker-port}, we
prove that consistent network updates are impossible if SDN ports do not support
atomic read-modify-write operations.

Our $\CPO$ and $\DPO$ algorithms highlight the fundamental trade-offs
between the concurrency of installation of policy updates and the
overhead on messages and switch memories.
Indeed, while being optimal in terms of tag complexity, $\DPO$
essentially reduces to installing updates sequentially.
Our initial concerns were resilience to failures and overhead, so our
definition of the CPC problem did not require any form of
``concurrent entry''~\cite{GME-00}.
But it is important to understand to which extent the concurrency of a
CPC algorithm can be improved, and we leave it to future research.

Another direction for future research regards more complex, non-commutative policy compositions:
while our protocol can also be used for, \eg, policy removals,
it will be interesting to understand how general such approaches are.
We have also started to develop a proof-of-concept prototype implementation of our distributed
control plane~\cite{ons13stn}.

%[[PK comment for now
%\section*{Acknowledgments}
%
%We would like to thank Yehuda Afek and Roy Friedman for interesting
%discussions.
%]]

%\newpage

\bibliographystyle{abbrv}
\bibliography{references}  % main.bib is the name of the Bibliography in this case

%\newpage

\begin{appendix}

%[[PK To save space
\ignore{
%============================================================
\section{Detailed Example}\label{sec:example}
%============================================================
%
Imagine a network consisting of three switches
\texttt{sw1}, \texttt{sw2} and \texttt{sw3} (Figure~\ref{fig:example1}).
The switches constitute a complete network and are
manipulated by three controllers $p_1$, $p_2$, and $p_3$.
The function of the controllers is to  accept policy-update requests issued by the control
applications and try installing the updates on the switches.

\begin{comment}
\begin{figure}[t]
\centering
\subfigure[]{
	\includegraphics[width=.28\columnwidth]{SDN-processes}
	\label{fig:example1}
}
\subfigure[]{
	\includegraphics[width=.68\columnwidth]{sequential}
	\label{fig:example2}
}
\vspace{-1.0em}
\caption{\footnotesize Example of a policy composition: (a) $3$-controller control
  plane and $3$-switch data plane, (b) a sequentially composable
  concurrent history $H$ and its sequential equivalent $H_S$.}
\label{fig:examples}
\vspace{-1.0em}
\end{figure}
\end{comment}

An example of a concurrent history $H$ is presented in
Figure~\ref{fig:example2}. Here the three controllers try to
concurrently install three different policies $\pi_1$, $\pi_2$, and
$\pi_3$. Imagine that $\pi_1$ and $\pi_2$ are applied to disjoint
fractions of traffic (\eg, $\pi_1$ affects only \texttt{http} traffic
and $\pi_2$ only \texttt{ssh} traffic) and, thus, can be installed
independently of each other.
In contrast, let us assume that $\pi_3$ is conflicting with both
$\pi_1$ and $\pi_2$ (\eg, it applies to traffic coming from
$\texttt{src address}=1.2.3.4$).
In this history, requests $\pi_1$ and $\pi_2$ are committed (returned
$\ack$), while $\pi_3$ is aborted (returned $\nack$).

While the concurrent policy-update requests are processed, three
packets are injected to the network (at switches \texttt{sw1},
\texttt{sw2}, and \texttt{sw3}) leaving three \emph{traces} depicted
with dotted and dashed arrows.
Each trace is in fact the sequence of ports which the packet goes
through while it traverses the network.
For example, in one of the traces (depicted with the dotted
arrow), a packet arrives at \texttt{sw1}, then it is forwarded to
\texttt{sw2}, and then to \texttt{sw1}.

%
%\begin{figure}[t]
%%\centering
%%\subfigure[3 control modules in an Onix-like architecture.]{
%%	\includegraphics[width=.48\columnwidth]{../pics/SDN-layers}
%%	\label{fig:SDN-layers}
%%}
%%\subfigure[Per-policy flow-space.]{
%%	\includegraphics[width=.45\columnwidth]{../pics/flowspace}
%%	\label{fig:flowspace}
%%}
%\caption{Sequentially composable policies: a real history and its sequential
%equivalent.}
%  \label{fig:example2}
%%\vspace{-1.5em}
%\end{figure}

Next to $H$ we present its ``sequential equivalent''
$H_S$. In the sequential history, no two requests
are applied concurrently and no request is rejected.
Also, the traces of $H$ are reshuffled in $H_S$ in such a way that no
packet is in flight while an update is being installed.
Note that each trace in $H$ is exactly the same as in $H_S$: no packet
can distinguish $H$ from $H_S$, \ie, the traffic on the data plane is
processed \emph{as though} the application of policy updates is
\emph{atomic} and packets cross the network \emph{instantaneously}.
Here the first packet is processed according to the initial policy
$\pi_0$, the second by the composition of $\pi_0$ and $\pi_1$, and
the third by the composition of $\pi_0$, $\pi_1$, and $\pi_2$.
Note that the aborted request $\pi_3$ does not affect any packet in
the network.
The existence of $H_S$ establishes that $H$ is \emph{sequentially composable}.

\section{Failure-Free Scenario}\label{ssec:bit}

%To acquaint ourselves with the problem, let us first investigate a setting where controllers cannot fail.
%In the following, we will show that a distributed control plane can solve the CPC problem with tag complexity two
%(a single bit). Our protocol is best described from the perspective of a single controller; a distributed and concurrent
%version can simply be obtained by replacing the single controller with a replicated state machine (the specification
%of the state machine is a special case of the machine in Section~\ref{ssec:lin}).

In the absence of failures, the CPC problem can be solved
with two tags only. While this already follows from the $\DPO$ algorithm,
in the following, we will present a simpler solution.

We will describe the protocol for a single controller. (The case $n>1$ is a trivial extension
in the absence of failures.)
When the controller
receives a new policy $p$, the next available tag (modulo 2) is chosen from $\{0,1\}$:
$\tau_{\text{old}}=\tau_{\text{new}}$ and $\tau_{\text{new}}=\tau_{\text{old}}+1$ mod $2$. However, at this point, the tag
$\tau_{\text{new}}$ may still be in use by old packets in the network, so by the per-packet consistency
the forwarding rules cannot be changed yet. Only when the controller
is notified that the last packet tagged with $\tau_{\text{new}}$ has left the system,
it starts to modify the rules, according to the following \emph{2-phase rule installation protocol}~\cite{network-update}:
First, the new forwarding rules are added at all \emph{internal} ports (using the tag $\tau_{\text{new}}$).
Subsequently, the following rule
is added to all \emph{ingress} ports: all arriving packets are tagged with $\tau_{\text{new}}$.
 These two phases ensure an important invariant of this protocol: when packets arrive at internal ports in the network,
 the rules are already prepared for the new tag.

The algorithm is also correct:
(1) The per-packet consistency property follows from the 2-phase rule installation
protocol, the fact that only ingress ports tag packets, and the fact that rules at internal ports are only
updated once the last old packet with the corresponding tag has left the
system. (2) Any incoming packet is tagged at the ingress port without delay,
and hence subject to at least one policy. (3) Since packets will eventually leave the system, and hence the controller is notified about the available tag,
  the protocol also ensures that policies are eventually installed. Finally, (4) policy composition is trivial
under a single controller.

\begin{theorem}\label{thm:onebit}
The CPC problem without failures can be solved with two tags.
\end{theorem}
}
%]]

\section{Impossibility for Weaker Port Model}\label{sec:weaker-port}

It turns out that SDN ports must support atomic policy updates (\ie, an atomic read-modify-write);
otherwise it is impossible to update a network consistently in the presence of even one crash failure.
Concretely, we assume here that a port can be accessed with two atomic
operations: $\textit{read}$ that returns the set
  of rules currently installed at the port and $\textit{write}$ that
  updates the state of the port with a new set of rules.
%[[PK updated for simplicity
% $\textit{read}$,
%$\textit{add-rule}$, and $\textit{remove-rule}$:
%
%A $\textit{read}(i)$ operation returns the set of rules representing the
%switch function of port $i$.
%An $\textit{add-rule}(i,r)$ operation adds a new rule $r$ to port $i$.
%A $\textit{remove-rule}(i,r)$ operation removes rule $r$
%from port $i$ (if present). \ssnote{We should say later when we remove rules in our protocols.}
%]]

\begin{theorem}\label{thm:impossible}
There is no solution to CPC using consensus objects that tolerates one or more crash failures.
\end{theorem}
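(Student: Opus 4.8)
The plan is to reduce consensus-based impossibility to a valency argument in the style of FLP~\cite{FLP85}, but adapted to the two-layer concurrency of the SDN model, where data-plane packet traces act as ``read-only transactions'' that cannot be delayed. First I would fix a minimal network: two ingress ports feeding into a single shared internal port $v$ whose outgoing rule is the only thing that determines which of two composable, domain-overlapping policies $\pi$ and $\pi'$ a packet experiences. Under the weaker port model, updating $v$ requires a separate \emph{read} of $v$'s rule set followed by a \emph{write}; crucially, a controller that has read $v$ but not yet written can be stalled indefinitely while another controller completes an entire update. I would then consider two controllers $p_1$ and $p_2$ concurrently issuing $\textit{apply}_1(\pi)$ and $\textit{apply}_2(\pi')$ on the same domain (so the two requests conflict, and by Consistency exactly one must commit and take effect on traffic, the other abort without affecting a single packet).

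The core of the argument is a critical-configuration step. By Termination every correct controller must eventually return, and by Consistency the committed policy's forwarding rule at $v$ must eventually be the one installed; so the reachable configurations are partitioned into those in which $\pi$'s rule is the ``decisive'' one at $v$ and those in which $\pi'$'s is. Starting from a configuration where this is not yet determined, I would walk to a critical configuration $C$ at which a single step $s$ by one controller, say $p_1$, decides it one way and a single step by $p_2$ decides it the other way. The step that flips the decision must be a \emph{write} to $v$ (a read, a controller-to-controller message, or a local step cannot change which rule governs $v$). So in $C$ both the ``$\pi$-deciding'' step and the ``$\pi'$-deciding'' step are writes to $v$ by distinct controllers --- this is exactly where the absence of an atomic read-modify-write bites, since with an atomic $\textit{update}(v,g)$ the deciding step could be conditioned on $v$'s current content and the two critical steps would not be blind overwrites.

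From here I would derive the contradiction by interleaving data-plane events with the two pending writes. Crash $p_1$ right before its write $w_1$; let $p_2$ run solo, complete its update so that $\pi'$ governs $v$, commit $\textit{apply}_2(\pi')$, and $\textit{apply}_1(\pi)$ must then abort --- so by the all-or-nothing semantics no packet may ever be processed by $\pi$. Now inject a packet that reaches $v$ (a \textit{forward} event the adversary schedules), let it be processed and leave the network consistently with $\pi'$; then wake $p_1$ and let it perform the single stale write $w_1$, after which $v$'s rule is $\pi$'s. Inject a second packet of the same type: it is now forwarded according to $\pi$, so its trace is consistent with $\pi$ --- but $\textit{apply}_1(\pi)$ aborted, contradicting legality (only committed policies may affect traffic) and hence the Consistency property of CPC. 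Symmetrically, crashing $p_2$ at $C$ and running $p_1$ solo yields the mirror-image contradiction with $\pi$ and $\pi'$ swapped. Since consensus objects only add the power to agree on an order of high-level requests --- which the above adversary already respects, as it lets one request commit and the other abort --- the contradiction persists even with consensus available.

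The main obstacle I anticipate is making the ``critical configuration'' argument airtight in the presence of the unbounded, uncontrollable stream of \textit{inject}/\textit{forward} events: one must argue that valency (here, which policy ends up governing $v$ on traffic injected late enough) is well-defined and stable under the asynchronous fair scheduler, and that the adversary can always sneak a decisive data-plane \textit{forward} through $v$ in the window between a stale read and its write. Handling the bookkeeping so that the two injected packets are of the same type, both reach $v$, and their traces are unambiguously attributable to $\pi$ versus $\pi'$ is the delicate part; the rest is a standard FLP-style case split. \hfill $\Box$\\[2mm]
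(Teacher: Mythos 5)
There is a genuine gap, and it lies exactly where you wave your hands: the claim that consensus objects do not interfere with the critical-configuration argument. Your scenario uses two \emph{conflicting} requests $\textit{apply}_1(\pi)$ and $\textit{apply}_2(\pi')$, of which exactly one must commit. But against that workload a correct $1$-resilient algorithm exists even with plain read/write ports: the controllers first use consensus (or a replicated state machine as in \PS) to agree on which request wins \emph{before touching the data plane}, and then every controller writes only the winner's rules; since all helpers write the identical rule set, stale overwrites are harmless, and a losing policy's rules are never written anywhere. Consequently, the reachable configurations need never contain your critical configuration in which $p_1$ is poised to perform a ``blind'' write installing $\pi$'s rule at $v$ while $\pi$ may abort --- so the FLP-style bivalence walk has nothing to bite on, and the final step (waking $p_1$ to install an aborted policy's rule) cannot be forced. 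The source of impossibility is not disagreement about which conflicting policy wins (consensus resolves that trivially); it is a lost-update phenomenon that consensus cannot repair because the needed information does not exist at the slow controller.

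The paper's proof is a direct construction with two \emph{composable} policies, both of which commit: $\pr(\pi_2)>\pr(\pi_1)$, $\dom(\pi_2)\subset\dom(\pi_1)$, with different paths from each of two ingress ports. Run $p_1$ alone installing $\pi_1$; by $1$-resilience it would finish, so stop it after it has written ingress port $1$ and just before its pending write to port $2$ --- a write whose content was computed without any knowledge of $\pi_2$ and hence contains only $\pi_0\cdot\pi_1$. Let a packet in $\dom(\pi_1)$ be processed at port $1$, which forces $\pi_1$ to commit; then let $p_2$ (again by $1$-resilience) bring both ports to the composition $\pi_0\cdot\pi_1\cdot\pi_2$, so $\pi_2$ also commits; finally release $p_1$'s enabled stale write, which overwrites port $2$ and erases $\pi_2$, so subsequent packets in $\dom(\pi_2)$ are handled by $\pi_1$, contradicting sequential composability. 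Note that this argument needs no valency machinery and is immune to consensus precisely because $p_1$'s write content was fixed before $\pi_2$ was ever proposed. If you want to salvage your proof, replace the conflicting pair by such a composable pair and replace the critical-configuration search by this explicit schedule; the part of your intuition that does survive is that the decisive stale step must be a whole-state write, which is exactly what the atomic $\textit{update}(i,g)$ operation rules out.
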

\begin{proof}
By contradiction and assume that there is a $1$-resilient CPC
algorithm $A$ using consensus objects.

Consider a network including two ingress ports, $1$ and $2$, initially
configured to forward all the traffic to internal ports (we denote this policy by $\pi_0$).
Let processes $p_1$ and $p_2$ accept two policy-update requests $\textit{req}_1=\textit{apply}_1(\pi_1)$
and $\textit{req}_2=\textit{apply}_2(\pi_2)$, respectively, such that
$\pi_1$ is refined by $\pi_2$, \ie, $\pr(\pi_2)>\pr(\pi_1)$ and $\dom(\pi_2)\subset\dom(\pi_1)$,
and paths stipulated by the two policies to ingress ports $1$ and $2$
satisfy $\pi_1^{(1)}\neq \pi_2^{(1)}$ and $\pi_1^{(2)}\neq \pi_2^{(2)}$.

Now consider an execution of our $1$-resilient algorithm in which
$p_1$ is installing $\pi_1$ and $p_2$ takes no steps. Since the
algorithm is $1$-resilient, $p_1$ must eventually complete the
update even if $p_2$ is just slow and not actually faulty. Let us stop
$p_1$ after it has configured one of the ingress ports, say $1$, to
use policy $\pi_1$, and just before it changes the state of $2$ to use
policy $\pi_1$. Note that since $p_1$ did not witness a single step of
$p_2$ the configuration it is about to write to port $2$ only contains
the composition of $\pi_0$ and $\pi_1$.

Now let a given packet in $\dom(\pi_1)$ arrive at port $1$ and be
processed according to $\pi_1$.
We extend the execution with $p_2$ installing $\pi_2$ until both ports $1$
and $2$ are configured to use the composition
$\pi_0\cdot\pi_1\cdot\pi_2$. Such an execution exists, since the
algorithm is $1$-resilient and $\pi_1$ has been already applied to one
packet.
Therefore, by sequential composability,  the sequential equivalent of
the execution, both $\textit{apply}(\pi_1)$ and
$\textit{apply}(\pi_1)$
must appear as committed.

But now we can schedule the enabled step of $p_1$ to overwrite the
state of port $2$ with the ``outdated'' configuration that does not
contain $\pi_2$. From now on, every packet in $\dom(\pi_2)$ injected
at port $2$ is going to be processed according to $\pi_1$---a
contradiction to sequential composability.
\end{proof}

\section{Proofs}

\subsection{Proof of Theorem~\ref{thm:tags}}

The correctness of the algorithm is based on three simple arguments.

\begin{enumerate}
\item \emph{Global policy order:} The strict port order $\prec$
  guarantees that the equivalent sequential
history respects the total order of policy updates imposed by the ingress port of
lowest order. 

\item \emph{All-or-nothing semantics:} A policy which started taking effect at some ingress ports will eventually 
be installed at all ingress ports. This follows from the reliable broadcast implementation: the rebroadcasts ensure
that eventually, all processes will learn about (and help finish) the planned policy installation, even if the initiator failed before it
notified the other processes. 
% Stefan: do we need to remind the reader about commutativity and idempotency assumptions?

\item \emph{Consistency:} The proof of per-packet consistency is simple: a packet will be marked with an immutable
tag at its ingress port, and the tag defines a unique path in the
network, consistent with the corresponding policy. 
%\hfill $\Box$
\end{enumerate}

\subsection{Proof of Theorem~\ref{thm:lin}}

%We study the termination and consistency properties in turn.

\emph{Termination:} Consider any $f$-resilient execution $E$ of $\DPO$ and let $\pi_1,\pi_2,\ldots$ be the
sequence of policy updates as they appear in the
linearization of the state-machine operations in $E$.
Suppose, by contradiction, that a given process $p_i$ never completes
its policy update  $\pi$.
Since our state-machine $\PS$ is $f$-resilient, $p_i$ eventually completes its
$\textit{push}(i,\pi)$ operation.
Assume $\pi$ has order $k$ in the total order on push operations.
Thus, $p_i$ is blocked in processing some policy $\pi_{\ell}$, $1\leq \ell \leq
k$,  waiting in Lines \ref{line:pull} or ~\ref{line:wait3}.

Note that, by the Non-Triviality and Agreement properties of $\PS$,
when a correct process completes installing $\pi_{\ell}$,
eventually every other correct process completes installing $\pi_{\ell}$.
%This is because the correct processes perform policy updates in the
%same order. % and, once finished, broadcast the result to every other
%process (Line~\ref{line:bcast}).
Thus, all correct processes are blocked while processing $\pi$.
Since there are at most $f$ faulty processes, at most $f$ policies can
be blocked forever.
Moreover, since every blocked process has previously pushed a policy
update, the number of processes that try to pull proposed policy
updates
cannot exceed the number of previously pushed policies.
Therefore, by the Non-Triviality property of $\PS$,
eventually, no correct process  can be blocked forever in
Line~\ref{line:pull}.

%No correct process can be blocked
%in Line~\ref{line:wait2}. Indeed, by the Non-Triviality property,
%if a correct process pulled $(\pi_{\ell},\tau_{\ell},r)$ with $r=1$,  then $\pi_{\ell}$ has been previously installed by some
%process that must have broadcasted $(\pi_{\ell},\ell-1)$ (by the
%Agreement property of $\PS$, every correct process blocked
%while processing $\pi_{\ell}$ has $|S|-\ell-1$).

Finally, every correct process has previously completed installing
policy with tag $\tau_{\ell-1}$.
By the algorithm, every injected packet is tagged with $\tau_{\ell-1}$ and,
 eventually, no packet with a tag other
than $\tau_{\ell-1}$ stays in the network.
Thus, no  correct process can be blocked
in Line~\ref{line:wait3}---a contradiction, \ie, the algorithm
satisfies the Termination property of CPC.

\emph{Consistency:} To prove the Consistency property of CPC,  let $S$ be a sequential
history that respects the total order of policy updates determined by
the $\PS$. According to our algorithm, the response of each update in
$S$ is $\ack$ if and only if it does not conflict with the set of
previously committed updates in $S$.
Now since each policy update in $S$ is installed by the two-phase
update procedure using atomic read-modify-write update operations,
every packet injected to the network, after a policy update
completes, is processed according to the composition of
the update with all preceding updates.
Moreover, an incomplete policy update that manages to push the policy
into $\PS$ will eventually be completed by some correct process (due to the reliable broadcast implementation).
Finally, the per-packet consistency follows from the fact that packets will always respect
the global order, and are marked with an immutable
tag at the ingress port; the corresponding forwarding rules are never changed while packets are
in transit.

Thus, the algorithm satisfies the Consistency property of CPC.

\subsection{Proof of Theorem~\ref{th:tagbound}}
\label{app:proof:dpo}

Assume the network $T_f$ of two ingress ports $A$ and $B$, and $f+1$
``loops'' depicted in Figure~\ref{fig:tagbound} and consider a
scenario in which the controllers apply a sequence of policies defined
as follows.
%[[
%Suppose that the controllers concurrently apply
%composable policies as follows.
%]]
%Let $\pi_b$, where $b\in\{0,2^{f+1}-1\}$, denote a policy
%which specifies a path starting from the ingress port and following
%$b$ in the sense that bit $1$ in position $k$ of the binary
%representation of $b$ means that the path specified by $\pi_b$
%takes the lower path in loop $k$.
%For example the  path specified by $\pi_{1}$ goes through the
%upper paths in loops $f,\ldots,1$ and then takes the lower path
%in loop $0$.
Let $\pi_i$, $i=1,\ldots,f$, denote a policy
which, for each of the two ingress port,  specifies a path that in
every loop $\ell\neq i$ takes the upper path and in loop $i$ takes the
lower path (the dashed line in Figure~\ref{fig:tagbound}).
The policy $\pi_0$ specifies the paths that always go over the upper parts
of all the loops (the dashed line in Figure~\ref{fig:tagbound}).

%$b$ in the sense that bit $1$ in position $i$ of the binary
%representation of $b$ means that the path specified by $\pi_b$
%takes the lower path in loop $i$.
%For example the  path specified by $\pi_{1}$ goes through the
%upper paths in loops $f,\ldots,1$ and then takes the lower path
%in loop $0$.

We assume that for any two policies $\pi_i$ and $\pi_{j}$, such that
$0\geq j<j \geq f+1$, we have $\pr(\pi_{i})>\pr(\pi_{j})$
and $\dom(\pi_{i})\subset\dom(\pi_{j})$, \ie, all these the
policies  are composable, and  adding a new policy to the
composition makes the composed policy more refined.
Note that, assuming that only policies $\pi_i$, $i=0,\ldots,f+1$,
are in use, each of the ingress ports may only store one rule per tag that
forwards all the packets to the next branching port.
Intuitively, the only way to make sure that an injected packet is processed
according to a new policy in the set is to equip injected packets with
specific tags and forward them further.

Suppose that $0$ is the tag used for the initially installed $\pi_0$.
By induction on $i=1,\ldots,f+1$, we are going to show that any $f$-resilient CPC algorithm
on $T_f$ has a finite execution $E_i$
%(every policy update is complete it),
 at the end of which
(1) a composed policy $\pi_0\cdot\pi_1\cdots\pi_{i}$ is installed
and (2) there is a set of $i$ processes, $q_1,\ldots,q_{i}$, such that
each $q_{\ell}$, $\ell=1,\ldots,i$, is about to access
an ingress port with an update operation that, if the currently installed
rule uses $\ell-1$ to tag the injected packets, replaces it with a
rule that uses $\ell$ instead.

For the base case $i=1$, assume that $p_1$ proposes to install
$\pi_1$. Since the network initially carries traffic tagged $0$, the
tag used for the composed policy $\pi_0\cdot\pi_1$ must use a tag
different from $0$, without loss of generality we call it $1$.
There exists an execution in which
some process $q_1$ has updated the tag on one of the ingress port with
tag $1$ and is just about update the other port.
Now we ``freeze'' $q_1$ and let another process to complete the update
of the remaining ingress port.
Such an execution exists, since the protocol is $f$-resilient,
assuming that $f>0$ and, by the Consistency property of CPC,  any update that
affected the traffic must be eventually completed.
In the resulting execution $E_1$, $q_1$ is about to update
an ingress port to use tag $1$ instead of $0$ and the network operates
according to policy $\pi_0\cdot\pi_1$.

\begin{figure}[tbph]
  \centering
  \includegraphics[scale=0.35]{tagbound.0}
  \caption{The $(f+1)$-loop network topology $T_f$.}
  \label{fig:tagbound}
\end{figure}

Now take $1<i\leq f+1$ and,
inductively, consider the execution $E_{i-1}$.
%in which processes $p_1,\ldots,p_{i-1}$ are just about to change
%the ingress port to use policy compositions based on
%$\pi_0,\pi_1,\pi_2,\pi_4,\ldots,p_{2^{i-1}}$ using tags
%$1,\ldots,i-1$ assuming tags $0,\ldots,.
Now suppose that some process in $\Pi-\{q_1,\ldots,q_{i-1}\}$  proposes to install
$\pi_i$.
Similarly, since the algorithm is $f$-resilient (and, thus,
$(i-1)$-resilient), there is an extension of $E_{i-1}$ in which no
process in $\{q_1,\ldots,q_{i-1}\}$ takes a step after $E_{i-1}$ and
eventually some process $q_i\notin \{q_1,\ldots,q_{i-1}\}$ updates one of the ingress ports to apply
$\pi_0\cdots\pi_i$ so that instead of the currently used tag $i-1$ a
new tag $\tau$ is used.
(By the Consistency property of CPC, $\pi_i$ should be composed with all policies $\pi_0,\ldots,\pi_{i-1}$.)

Naturally, the new tag $\tau$ cannot be $i-1$.
Otherwise, while installing $\pi_0\cdots\pi_i$,
either $q_i$ updates port $i$ before port $i-1$ and
some packet tagged $i$ would have to
take lower paths in both loops $i$ and $i-1$ (which does not
correspond to any composition of installed policies),
or $q_i$ updates port $i-1$ before $i$ and some packet would
have to take no lower paths at
all (which corresponds to
the policy $\pi_0$ later overwritten by $\pi_0\cdots\pi_{i-1}$).

Similarly, $\tau\notin\{0,\ldots,i-2\}$. Otherwise, once the installation
of $\pi_0\cdots\pi_i$ by $q_i$ is completed, we can wake up process
$p_{t+1}$ that would replace the rule of tag $\tau$ with a rule using
tag $\tau+1$, on one of the ingress ports. Thus, every packet injected at the
port would be tagged $\tau+1$. But this would violate the
Consistency property of CPC, because $\pi_0\cdots\pi_{i}$ using tag
$\tau$ is the most recently installed policy.

Thus, $q_i$, when installing $\pi_0\cdots\pi_{i}$, must use a tag not in $\{0,\ldots,i-1\}$,
say $i$. Now we let $q_i$ freeze just before it is about to install
tag $i$ on the second ingress port it updates.
Similarly, since $\pi_0\cdots\pi_i$ affected the traffic already on
the second port, there is an extended execution in which another
process in  $\Pi-\{q_1,\ldots,q_{i}\}$ completes the update and we get
the desired execution $E_i$.

In $E_{f+1}$ exactly $f+2$ tags are concurrently in use, which
completes the proof.

\end{appendix}

\end{document}